\documentclass[final,3p,times]{interact}
\usepackage{epstopdf}
\usepackage{subfigure}
\usepackage{algorithmic}
\usepackage{algorithm}
\usepackage{arydshln}
\usepackage{color}
\usepackage{caption}
\usepackage{geometry}
\geometry{margin=0.85in}
\usepackage[colorlinks]{hyperref}
\hypersetup{allcolors=blue}
\usepackage[numbers,sort&compress]{natbib}
\bibpunct[, ]{[}{]}{,}{n}{,}{,}

\theoremstyle{plain}
\newtheorem{theorem}{Theorem}[section]

\newtheorem{proposition}[theorem]{Proposition}
\theoremstyle{definition}

\theoremstyle{remark}

\begin{document}
\title{Model-based Computed Tomography Image Estimation: Partitioning Approach}
\author{
\name{Fekadu~L. Bayisa\thanks{CONTACT Fekadu~L. Bayisa. Email: fekadu.bayisa@umu.se} and Jun Yu}
\affil{Department of Mathematics and Mathematical Statistics,  Ume{\aa} University, Ume{\aa}, Sweden}}
\maketitle
\begin{abstract}
There is a growing interest to get a fully MR based radiotherapy. The most important development needed is to obtain improved bone tissue estimation. The existing  model-based methods perform poorly on bone tissues. This paper was aimed at obtaining improved  bone tissue estimation. Skew-Gaussian mixture model and Gaussian mixture model were proposed to investigate CT image estimation from MR images by partitioning the data into two major tissue types. The performance of the proposed models was evaluated using leave-one-out cross-validation method on real data. In comparison with the existing model-based approaches, the model-based partitioning approach outperformed in bone tissue estimation, especially in dense bone tissue estimation. 
\end{abstract}
\begin{keywords}
Computed tomography; magnetic resonance imaging; CT image estimation;  skew-Gaussian mixture model; Gaussian mixture model
\end{keywords}
\section{Introduction}
Magnetic resonance (MR) imaging and computed tomography (CT) are the most widely used  diagnostic imaging technologies in medicine. They are used to obtain more detailed cross-sectional images of human body. CT uses ionizing radiation to record a pattern of radiodensities to obtain  cross-sectional images.  The ionizing radiations are attenuated as they pass through the tissues of patients. The amount of attenuations depends on the tissue types. The differences in attenuation between adjacent tissues create contrast on CT images. Tissues with  higher (or lower) attenuation appear brighter (or darker) on grayscale CT images. As a result, air, soft, and bone tissues appear as darkest, darker, and white on  grayscale CT images. Therefore, CT image is excellent for identifying and assessing the structures of bone tissues. On the other hand, exposing a patient to ionizing radiation in CT imaging may have a risk for radiation-related cancer. 

MR imaging is remarkably different from CT. It does not depend on ionizing radiations. MR imaging relies on the absorption and
emission of radio waves from tissue protons exposed to a strong magnetic field. Thus, MR imaging is safer than CT imaging. The relative MR signal intensity differences between adjacent anatomic structures determine tissue contrast on MR images. In comparison with CT images, MR images show much better soft tissue contrast  and noticeably improves the delineation of tumors. However, MR images are poor in depicting bone tissues. The reason is that bone, air, and rapidly flowing blood appear black on grayscale MR images.

A new innovation MR-only based radiotherapy enhances tissue contouring and precision in soft tissue therapy setup. It also improves biological information at treatment planning and avoids registration errors, which are errors due to  transformation of different images of the same scene into one coordinate system, between CT and MR images. Moreover, MR-only based radiotherapy is a cost effective approach as it reduces redundant imaging. However, co-registered CT and MR image complement each other due to  better bone tissue imaging of CT \citep{ABO12, ABO4, ABO6}. MR images are not directly applicable for attenuation correction.  Attenuation refers to the loss of  signals due to absorption or scattering out of the signals in the body.  CT images are vital for attenuation correction in positron emission tomography (PET) imaging. This is due to the direct relation between CT image intensities and PET attenuation coefficients. However, CT scanner is not available in recently combined PET/MR imaging scanner. Therefore, MR-only based radiotherapy and the combined PET/MR imaging scanner can be successful if one can obtain a reliable CT image estimation. As a result, we need to develop a reliable CT image estimation method  from MR images.

Huynh et al.  \cite{ABO9} used a learning-based method to estimate CT image from MR image.  A patch of CT image is estimated directly from a given MR image patch using structured random forest. The robustness of the estimation has been evaluated using a new ensemble model. Nie et al. \cite{ABO18} proposed a 3D deep learning-based method for patch-wise estimation of CT images from MR images. The neural network generates structured output and it preserves the neighborhood information in the estimated CT image. Arabi et al. \cite{ABO1} suggested a two-step atlas-based algorithm to estimate CT image from MR image sequences. The estimation is mainly concerned with pinpointing of bone tissues.

Johansson et al. \cite{ABO10} used a Gaussian mixture model (GMM) to obtain CT substitute from MR images without taking spatial dependence between neighboring voxels into account. Johansson et al. \cite{ABO11}   investigated the uncertainty associated to the voxel-wise estimation of CT images. By considering spatial dependence between neighboring voxels, Kuljus et al. \cite{ABO14} extended the work of Johansson et al. \cite{ABO10} by using hidden Markov model (HMM) and Markov random field model (MRF).  Kuljus et al. \cite{ABO14} compared the estimation quality of GMM, HMM, and MRF. In terms of mean absolute error, HMM outperformed the other models and it was computationally robust than MRF. However, it had a weaker estimation quality on dense bone tissues. Even though MRF had superior performance on bone tissue estimation, it was computationally expensive. 

The main aim of this article is to further investigate the voxel-wise estimation of CT images from  MR images by partitioning the data into non-bone and bone tissues.  According to  Johansson et al. \cite{ABO11} and Kuljus et al. \cite{ABO14}, the  estimation of CT image was poor on air and bone tissues. It is this result that motivated the partitioning of the data into non-bone and bone tissues in order to further explore the estimation of CT images. Even though there is no clear-cut CT image intensity boundary between these tissue types, Waterstram-Rich and Gilmore \cite{ABO21} and Washington and Leaver \cite{ABO20} provide informative threshold delimiting these tissues.  Waterstram-Rich and Gilmore  \cite{ABO21} used 150 Hounsfield units (HU) as a lower limit for bone tissues.  On the other hand, Washington and Leaver \cite{ABO20} utilised 200HU as an approximate delimiting value of the tissues.

The partitioning of the data may introduce skewness. Consequently, there is a need to relax the normality assumption used in \citep{ABO10, ABO11, ABO14}.  Azzalini \cite{ABO2} proposed a univariate skew-normal model that relaxed the normality assumption by incorporating a skewness parameter in the distributional assumption.  Azzalini and Dalla  Valle \cite{ABO3} extended the univariate skew-normal to a multivariate skew-normal. A multivariate skew-normal is a tractable extension of a multivariate normal distribution with extra parameter to regulate skewness.  Lin et al. \cite{ABO17} introduced a univariate skew-normal mixture model in order to deal with population heterogeneity and skewness.  Lin \cite{ABO16} extended the univariate skew-normal mixture model to a multivariate skew-normal mixture model which is an alternative to the most widely used multivariate Gaussian mixture model.

Kahrari et al. \cite{ABO22} developed a multivariate skew-normal-Cauchy distribution and represented it as a shape mixture of the multivariate skew-normal distribution. Kahrari et al. \cite{ABO23} modified the multivariate skew-normal-Cauchy distribution and the modified version becomes a shape mixture of a special case of the fundamental skew-normal distribution developed by Arellano-Valle and Genton \cite{ABO24} with a univariate half-normal mixing distribution.  The class of scale mixtures of skew-normal-Cauchy distributions has been represented as a shape mixture of the class of scale mixtures of skew-normal distributions with a univariate half-normal mixing distribution \cite{ABO23}. Jamalizadeh and Lin \cite{ABO25} presented the scale-shape mixtures of skew-normal distributions for modeling asymmetric data. Arellano-Valle et al. \cite{ABO29} established a flexible class of multivariate distributions obtained by both scale and shape mixtures of multivariate skew-normal distributions. Based on the skew-t-normal distribution \cite{ABO27}, Tamandi et al. \cite{ABO26} introduced the shape mixtures of the skew-t-normal distributions that contain one additional shape parameter to regulate skewness and kurtosis. The shape mixtures of the skew-t-normal distributions are a flexible extension of the skew-t-normal distribution. Lin et al. \cite{ABO28} developed a multivariate extension of the skew-t-normal distribution that is obtained as a scale mixture of the multivariate skew-normal distribution introduced by Azzalini and Dalla Valle \cite{ABO3}. Based on the multivariate skew-t-normal distribution, Lin et al. \cite{ABO28} introduced a robust probabilistic mixture model which is composed of a weighted sum of a finite number of different multivariate skew-t-normal densities. The flexible mixture model based on the multivariate skew-t-normal distribution includes mixtures of normal, t and skew-normal distributions as special cases. Cabral et al. \cite{ABO5} proposed mixture models which consist of members of skew-normal independent distributions (the skew-normal, the skew-t, the skew-slash  and the skew-contaminated normal) and the mixture models are developed using the multivariate skew-normal distribution in \citep{ABO3}.
	
The most common approach to estimate the parameters of these skew-mixture models is the EM algorithm \cite{ABO7}. However, its M-step for the recent skew-mixture models is computationally intractable. Alternatively, we use an EM-type algorithm to estimate these skew-mixture models. That is, we  make further assumptions at the M-step of the EM algorithm. For instance,  expectation conditional maximization (ECM) algorithm \cite{ABO30}, which replaces the M-step of EM with several computationally simple conditional maximization steps, can be enough to estimate the parameters of some mixture models.  In mixtures of multivariate skew-t-normal distributions and shape mixtures of skew-t-normal distributions, expectation conditional maximization either (ECME) algorithm \cite {ABO31} was exploited to estimate its parameters by replacing some conditional maximization-steps of ECM with the conditional maximization likelihood step that maximizes the correspondingly constrained actual-likelihood function. Cabral et al. \cite{ABO5} developed an EM-type algorithm that removed some obstacles (for instance, Monte Carlo integration) during parameter estimation process in mixture models of skew-normal independent distributions. In this article, we used a mixture model based on the multivariate skew-normal distribution in \citep{ABO3} and developed EM-algorithm for its parameter estimation. That is, further assumption is not required at M-step of EM algorithm to estimate the parameters of the skew-Gaussian mixture model.

In this work, we use skew-Gaussian mixture model (SGMM) and Gaussian mixture model (GMM) to further explore the estimation of CT images by partitioning the data into two major tissue types: non-bone and bone tissues. Non-bone tissues consist of subclasses such as white matter, blood, water, fat,  gray matter, air, etc while bone tissues  consist of subclasses such as cranium, mandible, frontal bone, nasal bone, orbital bones, cortical bone, cancellous bone etc. These facts motivated us to apply mixture models on these tissue types. SGMM involves a weighted sum of the joint skew-normal distributions of a CT image intensity and its corresponding intensities of MR images. The number of skew-normal distributions in the mixture depends on the number of underlying tissue types  or clusters.  Latent variables that represent the underlying tissue types are utilized during   parameter estimation process of the model through incomplete data assumption in EM-algorithm framework \citep{ABO7}. Voxel-wise point estimator of CT image was obtained as a weighted sum of the conditional expected value of a CT image intensity given its corresponding intensities of MR images and the underlying tissue type. The probability that an underlying tissue type is determined based on the intensities of MR images was used as a weight of the conditional expected value. 

In summary, this study is concerned with comparing the CT image estimation performance of the partitioning approach with HMM, MRF, and GMM on bone tissues. The models HMM , MRF, and GMM are trained on the full data (data that are not partitioned into non-bone and bone tissues). We are also interested to compare the predictive quality of the partitioning approaches, SGMM and GMM* (GMM applied to each partition), on bone tissues. 

This article is organized as follows. The second section describes data acquisition and demonstrates statistical methods.  The third section presents the results obtained and the final section discusses the implication of the results. 
\section{Statistical methodology}
In this section, we describe the data, formulate SGMM, and develop parameter estimation method. We also demonstrate CT image estimation method and present evaluation method of the estimation. For the remaining models (GMM, HMM and MRF), we refer to  Kuljus et al. \citep{ABO14} and the references therein.
\subsection{Data acquisition}
CT and MR images were obtained from head of five patients. Four MR images were acquired from each patient using two dual echo ultrashort echo-time sequences with flip angles of 10 degrees and 30 degrees. The ultrashort echo-time sequences sampled a first echo (free induction decay) and a second echo (gradient echo) from the same excitation with an echo time of 0.07 and 3.76 milliseconds.  MR image of a patient was reconstructed to $192\times192\times192$ matrix. An entry in the matrix represents a signal intensity corresponding to a three dimensional tissue (voxel) with size 1.33  $\times$1.33 $\times$1.33 mm$^{3}$. One CT image of a patient was acquired using gradient echo Lightspeed with 2.5mm slice thickness. The acquired CT image was reconstructed with an in-plane resolution of 0.78  $\times$ 0.78 mm$^{2}$. One binary mask (an image with voxel value 1 (or 0) representing the region of interest (or the surrounding air)) was also developed to demarcate the head of a patient from its surrounding air. The main use of the binary mask is to exclude the surrounding air from the acquired CT and MR images. For each patient, the binary mask, the CT image, and the four MR images were co-registered and resampled to the same resolution (voxel-to-voxel correspondence and set to the same voxel dimension) using linear interpolation. For further technical details, we refer to Johansson et al. \citep{ABO10}. Voxel values of the CT image, the binary mask, and the four MR images were organised into six columns to obtain data for a patient. The organised data of each patient were column stacked and the surrounding air removed to obtain data for model fitting. Figure \ref{Or1} shows a slice data for a given  patient. 

\begin{figure}[H]
	\centering
	\captionsetup{justification=centering}
	\includegraphics[height=8cm, width=12cm]{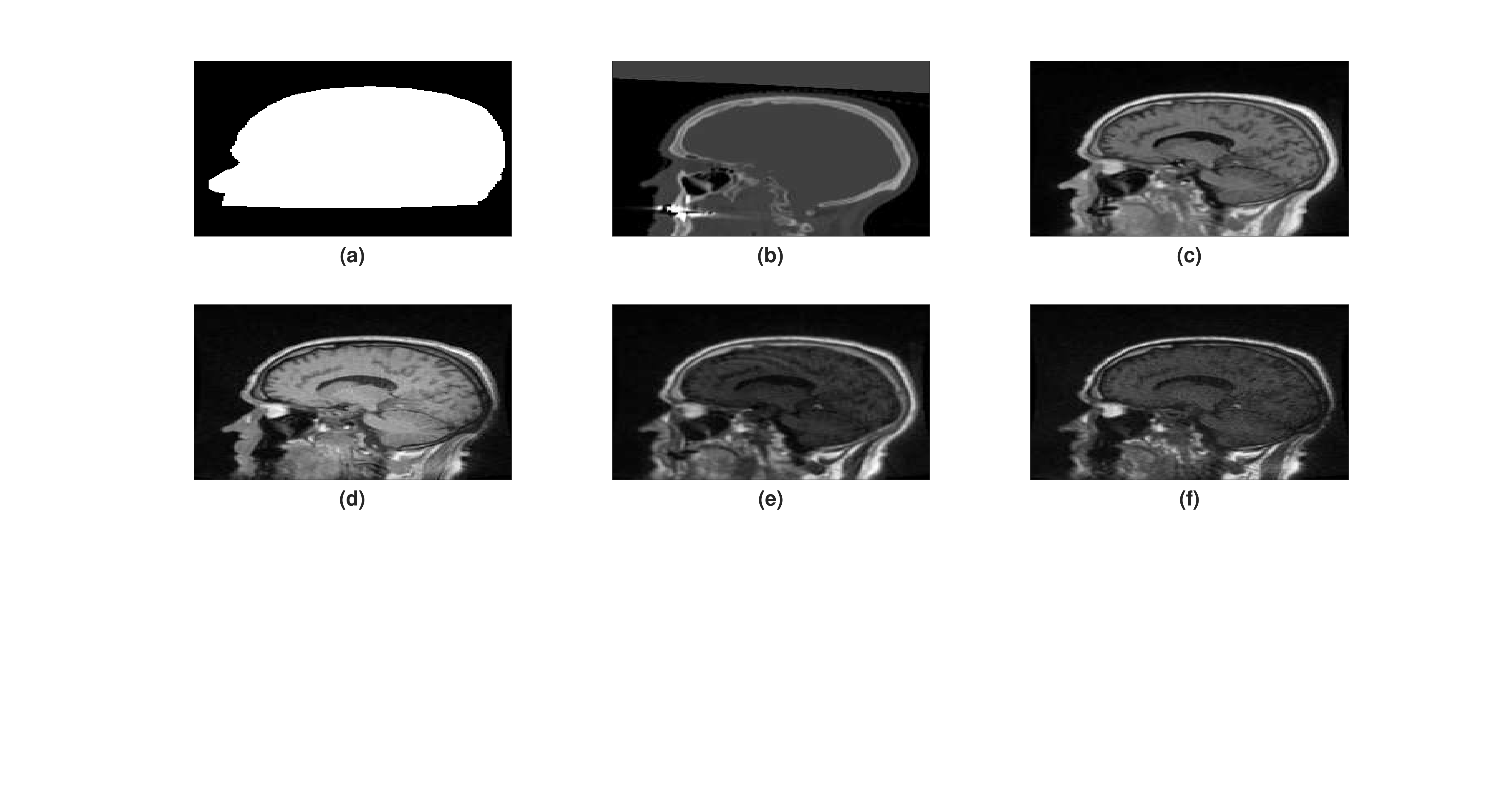}
\caption{Binary mask ((a)), CT image ((b)) and MR images ((c)-(f))}
	\label{Or1}
\end{figure}
\subsection{Data Partitioning}
This subsection describes data partitioning during model training. It also demonstrates how  MR images of new patients are utilized during CT image prediction.
\subsubsection{Data partition: Model training}	
CT image intensity threshold was utilized to partition the data into two major tissue types. Using 150HU CT image intensity as a limit, 50HU (is selected to take the delimiting value provided by Washington and Leaver \citep{ABO20}) overlap was allowed during parameter estimation process. The overlap of the major tissue types was motivated in order to minimize the effect of fuzzy boundary of the tissue types. Accordingly, CT image intensities in (-1024HU, 200HU) and (100HU, 3071HU] were assumed to represent non-bone and bone tissues. The minimum number of voxels for non-bone and bone tissues are 6214160 and 1292068. 
\subsubsection{MR images partition: CT image estimation}
We are interested to predict CT images from MR images of new patients. Since we only have MR images of the new patients, there is a need to partition the MR images of the new patients into non-bone and bone tissues.  There is poor bone tissue information on MR images and following that we estimate CT images for the new patients using a model trained on the full data, which is the data not partitioned into non-bone and bone tissues. The CT image intensity threshold and the estimated CT images are used to obtain MR data corresponding to the two major tissue types (non-bone and bone tissues). The trained models on the major tissue types are utilized to obtain the desired CT images of the new patients. 
\subsection{Statistical model: mixture of multivariate skew-normal model}
Let $Y_{i1}$ and $\mathbf{Y}_{i2} = \left(Y_{i2}, Y_{i3} \cdots, Y_{id}\right)'$ represent voxel $i$ of CT image and its corresponding MR images. In our real data, d=5.  A $d$-dimensional random vector $\mathbf{Y}_{i}=\left(Y_{i1}, \mathbf{Y}_{i2}'\right)'$ is assumed to follow a multivariate skew-normal distribution $\mathcal{SN}\left(\mathbf{y}_{i}|\boldsymbol\eta, \boldsymbol\Sigma, \boldsymbol\lambda\right)$ with a $d$-dimensional location parameter vector $\boldsymbol\eta$, a $d \times d$-dimensional positive definite dispersion matrix $\boldsymbol\Sigma$, and a $d$-dimensional skewness parameter vector $\boldsymbol\lambda$. Its density can be given by
\begin{eqnarray}\label{Oromo1}
f\left(\mathbf{y}_{i}|\boldsymbol\eta, \boldsymbol\Sigma, \boldsymbol\lambda\right) = 2\mathcal{N}\left(\mathbf{y}_{i}|\boldsymbol\eta, \boldsymbol\Sigma\right)\Phi\left(\boldsymbol\lambda'\boldsymbol\Sigma^{-1/2}\left(\mathbf{y}_{i}-\boldsymbol\eta \right)\right),  
\end{eqnarray}
where $\boldsymbol\Sigma^{-1/2}\boldsymbol\Sigma^{-1/2}=\boldsymbol\Sigma^{-1}$, $\Phi\left(\cdot\right) $ is a univariate standard normal distribution function,  $i = 1, 2, 3, \cdots, n$, and $n$ is the number of voxels. 
According to Lachos et al. \cite{ABO15}, the stochastic representation of  $\mathbf{Y}_{i}$ may be given by 
\begin{eqnarray}\label{Oromia6}
\mathbf{Y}_{i} = \boldsymbol\eta + \boldsymbol\Sigma^{\frac{1}{2}}\boldsymbol\delta U_{i} + \boldsymbol\Sigma^{\frac{1}{2}}\left(\mathbf{1}-\boldsymbol\delta\boldsymbol\delta'\right)^{\frac{1}{2}}\mathbf{V}_{i},
\end{eqnarray}
where
\begin{equation}\label{mekdes}
U_{i}\sim\mathcal{HN}\left(u_{i}|0, 1, \left(0, \infty\right)\right),  \quad\mathbf{V}_{i}\sim\mathcal{N}\left(\mathbf{v}_{i}|\mathbf{0}, \mathbf{1}\right),\quad \text{and} \quad \boldsymbol\delta =\frac{\boldsymbol\lambda}{\sqrt{1+\boldsymbol\lambda'\boldsymbol\lambda}}.
\end{equation}
In equation \eqref{Oromia6}, $\mathbf{V}_{i}$ and $U_{i}$ are assumed to be independent. The notations $\mathbf{0}$, $\mathbf{1}$, and $\mathcal{HN}\left(u_{i}|0, 1, \left(0, \infty\right)\right)$ in equation \eqref{mekdes} represent  $d$-dimensional zero vector,  $d\times  d$-dimensional identity matrix,  and  half-normal distribution, respectively.   Using equation \eqref{Oromia6}, a hierarchical model can be given by
\begin{eqnarray}\label{Oromia7}
\begin{split}
\mathbf{Y}_{i}|U_{i} = u_{i}& \sim \mathcal{N}\left(\mathbf{y}_{i}|\boldsymbol\eta + u_{i}\boldsymbol\xi , \boldsymbol\Omega \right),\\
U_{i}&\sim\mathcal{HN}\left(u_{i}|0, 1, \left(0, \infty\right)\right),
\end{split}
\end{eqnarray}
where
\begin{equation*}
\boldsymbol\delta =\frac{\boldsymbol\lambda}{\sqrt{1+\boldsymbol\lambda'\boldsymbol\lambda}},\quad\boldsymbol\xi=\boldsymbol\Sigma^{\frac{1}{2}}\boldsymbol\delta,\quad \text{and}\quad\boldsymbol\Omega= \boldsymbol\Sigma^{\frac{1}{2}}\left(\mathbf{1}-\boldsymbol\delta\boldsymbol\delta'\right)\boldsymbol\Sigma^{\frac{1}{2}}=\boldsymbol\Sigma-\boldsymbol\xi\boldsymbol\xi'. \notag
\end{equation*} 
Let $Z_{i}$ be a categorical random variable representing the underlying tissue types at voxel $i$. Define an indicator variable:
\begin{eqnarray*}
	Z_{ik}=1_{\left(Z_{i}=s\right)}=\begin{cases}
		1, & \text{if } k=s,\\
		0, & \text{otherwise},
	\end{cases}
\end{eqnarray*}
where $k = 1, 2, \cdots, K$. The definition implies that $P\left(Z_{ik} = 1\right)= P\left(Z_{i}=k\right)$. Let $P\left(Z_{i}=k\right)$= $\pi_{k}$, which represents the weight that the $i^{th}$ observation belongs to a tissue class $k$. To incorporate tissue heterogeneity into the statistical modeling, $\mathbf{Y}_{i}|Z_{ik} = 1$ is assumed to follow  a multivariate skew-normal  distribution $\mathcal{SN}\left(\mathbf{y}_{i}|\boldsymbol\eta_{k}, \boldsymbol\Sigma_{k}, \boldsymbol\lambda_{k}\right)$. This means that $\mathbf{Y}_{i}$ follows a mixture of  multivariate skew-normal  distributions. Its density may be given by 
\begin{align*}
f\left(\mathbf{y}_{i}|\boldsymbol\Psi\right) = \sum_{k=1}^{K}\pi_{k}\mathcal{SN}\left(\mathbf{y}_{i}|\boldsymbol\eta_{k}, \boldsymbol\Sigma_{k}, \boldsymbol\lambda_{k}\right),
\end{align*}
where	
\begin{equation*}	
\pi_{k} \ge 0,\hspace{0.01in} \sum_{k=1}^{K}\pi_{k}=1,
\hspace{0.05in}\boldsymbol\psi_{k}=\left\lbrace \pi_{k},\boldsymbol\eta_{k}, \boldsymbol\Sigma_{k}, \boldsymbol\lambda_{k}\right\rbrace ,\hspace{0.05in} \boldsymbol\Psi = \left\lbrace \boldsymbol\psi_{1}, \boldsymbol\psi_{2}, \cdots, \boldsymbol\psi_{K}\right\rbrace,\hspace{0.05in}k= 1, 2, \cdots, K. 
\end{equation*}
The unknown  parameters in $\boldsymbol\Psi$ are estimated from the independent observations $\mathbf{y}_{i}$. 
\subsection{Parameter estimation method }
\noindent The log-likelihood function of the data $\mathbf{y} = \left(\mathbf{y}_{1}, \mathbf{y}_{2}, \cdots, \mathbf{y}_{n} \right)'$ can be given by  
\begin{eqnarray*}
	\log f\left(\mathbf{y}|\boldsymbol\Psi\right)=\sum_{i=1}^{n} \log \left\lbrace \sum_{k=1}^{K}\pi_{k}\mathcal{SN}\left(\mathbf{y}_{i}|\boldsymbol\eta_{k}, \boldsymbol\Sigma_{k}, \boldsymbol\lambda_{k}\right)\right\rbrace.
\end{eqnarray*}
In general, there is no explicit analytical solution for 
$\arg\hspace{-0.03in}\max\limits_{\boldsymbol\Psi} \log f\left(\mathbf{y}|\boldsymbol\Psi\right)$. However, iterative maximizing  procedure under the idea of incomplete data via EM-algorithm can be used to obtain an optimal estimate of the parameters. Let $\mathbf{Z}_{i}$ be a $K$-dimensional column vector of  $Z_{ik}$.  Its realization is a $K$-dimensional vector consisting 1 at only one location and 0 at the remaining locations. The latent random vector $\mathbf{Z}_{i}$ follows a multinomial distribution with one trial and $P\left(Z_{ik} = 1\right)=\pi_{k}$. Using the indicator variable $Z_{ik}$, the hierarchical model \eqref{Oromia7} can be extended to:
\begin{align*}
\begin{split}
\mathbf{Y}_{i}|U_{i} = u_{i}, Z_{ik} = 1&\sim \mathcal{N}\left(\mathbf{y}_{i}|\boldsymbol\eta_{k} + u_{i}\boldsymbol\xi_{k}, \boldsymbol\Omega_{k}\right), \\ 
U_{i}|Z_{ik} = 1&\sim\mathcal{HN}\left(u_{i}|0, 1, \left(0, \infty\right)\right),
\end{split}
\end{align*}
where
\begin{equation*}
\boldsymbol\delta_{k} =\frac{\boldsymbol\lambda_{k}}{\sqrt{1+\boldsymbol\lambda_{k}'\boldsymbol\lambda_{k}}},\quad\boldsymbol\xi_{k}=\boldsymbol\Sigma_{k}^{\frac{1}{2}}\boldsymbol\delta_{k},\quad\text{and}\quad
\boldsymbol\Omega_{k} = \boldsymbol\Sigma_{k}-\boldsymbol\xi_{k}\boldsymbol\xi_{k}'.
\end{equation*}
The observed data $\mathbf{y}$ is assumed to be incomplete data. It is augmented with the latent matrix $\mathbf{z} = \left(\mathbf{z}_{1} , \mathbf{z}_{2}, \cdots, \mathbf{z}_{n}\right)'$ and the latent vector $\mathbf{u} = \left(u_{1}, u_{2},\cdots, u_{n}\right)'$ to form a complete dataset $\left(\mathbf{y}, \mathbf{u}, \mathbf{z}\right)$ in EM-algorithm framework. Assuming that $\left(\mathbf{Y}_{i}, U_{i}, \mathbf{Z}_{i}\right)$ is independent of  $\left(\mathbf{Y}_{j},  U_{j}, \mathbf{Z}_{j}\right)$  for every $i\ne j$, the complete-data log-likelihood is given by
\begin{eqnarray}\label{Oromo8}
\begin{split}
\log f\left(\mathbf{y}, \mathbf{u}, \mathbf{z}|\boldsymbol\Theta\right) &=& \sum_{i=1}^{n}\sum_{k=1}^{K}z_{ik}\left\lbrace -\frac{1}{2}\left(\mathbf{y}_{i}-\boldsymbol\eta_{k}-\boldsymbol\xi_{k}{u}_{i} \right)'\boldsymbol\Omega_{k}^{-1}\left(\mathbf{y}_{i}-\boldsymbol\eta_{k}-\boldsymbol\xi_{k}{u}_{i}\right)\right\rbrace+\\&&\sum_{i=1}^{n}\sum_{k=1}^{K}z_{ik}\left\lbrace\log\pi_{k}-\frac{1}{2}\log|\boldsymbol\Omega_{k}|+ const\right\rbrace,
\end{split}
\end{eqnarray}
where
\begin{equation*}
\boldsymbol\theta_{k}=\left\lbrace \pi_{k},\boldsymbol\eta_{k},\boldsymbol\xi_{k}, \boldsymbol\Omega_{k}\right\rbrace,\quad \boldsymbol\Theta = \left\lbrace \boldsymbol\theta_{1}, \boldsymbol\theta_{2}, \cdots, \boldsymbol\theta_{K}\right\rbrace, \quad k= 1, 2, \cdots, K, 
\end{equation*}
and $const$ is a constant function of parameters.

The E-step of EM-algorithm involves computing the expected value of the complete-data log-likelihood given $\mathbf{Y}$ and the current estimate $\boldsymbol\Theta^{old}$ of $\boldsymbol\Theta$. It may be given by the Q-function:
\begin{align}\label{Oromo9}
\mathcal{Q}\left(\boldsymbol\Theta,\boldsymbol\Theta^{\text{old}}\right) = E\left[\log f\left(\mathbf{Y},\mathbf{U}, \mathbf{Z}|\boldsymbol\Theta\right)|\mathbf{Y},\boldsymbol\Theta^{\text{old}}\right].
\end{align}
Using equation \eqref{Oromo8}, the expected value in equation \eqref{Oromo9} involves computing: 
\begin{eqnarray*}
	E\left[Z_{ik}|\mathbf{Y},\boldsymbol\Theta^{\text{old}}\right],\quad E\left[Z_{ik} U_{i}|\mathbf{Y},\boldsymbol\Theta^{\text{old}}\right],\quad \text{and}\quad E\left[Z_{ik} U^{2}_{i}|\mathbf{Y},\boldsymbol\Theta^{\text{old}}\right].
\end{eqnarray*}
The expected value $E\left[Z_{ik}|\mathbf{Y},\boldsymbol\Theta^{\text{old}}\right]$ can be given by
\begin{eqnarray*}
	E\left[Z_{ik}|\mathbf{Y},\boldsymbol\Theta^{\text{old}}\right]&=& \frac{\pi^{\text{old}}_{k}\mathcal{SN}\left(\mathbf{y}_{i}|\boldsymbol\eta_{k}^{\text{old}}, \boldsymbol\Sigma_{k}^{\text{old}}, \boldsymbol\lambda_{k}^{\text{old}} \right)}{\displaystyle\sum_{j=1}^{K}\pi^{\text{old}}_{j}\mathcal{SN}\left(\mathbf{y}_{i}|\boldsymbol\eta_{j}^{\text{old}}, \boldsymbol\Sigma_{j}^{\text{old}}, \boldsymbol\lambda_{j}^{\text{old}} \right)},\\
	&=&\gamma^{\text{old}}_{ik},
\end{eqnarray*}
where $\gamma_{ik}$ is the responsibility that the component $k$ of the mixture takes for explaining the observation $\mathbf{y}_{i}$. Let $\vartheta_{ik} = E\left[Z_{ik} U_{i}|\mathbf{Y},\boldsymbol\Theta^{\text{old}}\right]$, and $\psi_{ik}= E\left[Z_{ik} U^{2}_{i}|\mathbf{Y},\boldsymbol\Theta^{\text{old}}\right]$.  Then $\vartheta_{ik}$ can be simplified as follows: 
\begin{eqnarray}\label{Oromo11}
\vartheta_{ik}&=&E\left[Z_{ik} E\left[ U_{i}|\mathbf{Y},Z_{ik},\boldsymbol\Theta^{\text{old}}\right] |\mathbf{Y},\boldsymbol\Theta^{\text{old}}\right],\notag\\
&=&E\left[U_{i}|\mathbf{Y},Z_{ik}=1,\boldsymbol\Theta^{\text{old}}\right]E\left[Z_{ik} |\mathbf{Y},\boldsymbol\Theta^{\text{old}}\right],\notag\\
&=&\gamma^{\text{old}}_{ik}E\left[U_{i}|\mathbf{Y},Z_{ik}=1,\boldsymbol\Theta^{\text{old}}\right].
\end{eqnarray}
Using similar procedure,   $\psi_{ik}$ may be given by
\begin{eqnarray}\label{Oromo10}
\psi_{ik}=\gamma^{\text{old}}_{ik}E\left[U^{2}_{i}|\mathbf{Y},Z_{ik}=1,\boldsymbol\Theta^{\text{old}}\right].
\end{eqnarray}
If equations \eqref{Oromo1} and \eqref{Oromia7} are satisfied, then using inverse matrix adjustment formula in \textcolor{blue}{\citep{ABO19}} and matrix determinant lemma in \textcolor{blue}{\citep{ABO8}},
\begin{equation*}
U_{i}|\mathbf{Y}_{i}=\mathbf{y}_{i} \sim \mathcal{TN}\left(u_{i}|\frac{\boldsymbol\xi'\boldsymbol\Omega^{-1}\left( \mathbf{y}_{i}-\boldsymbol\eta\right)}{1+\boldsymbol\xi'\boldsymbol\Omega^{-1}\boldsymbol\xi}, \frac{1}{1+\boldsymbol\xi'\boldsymbol\Omega^{-1}\boldsymbol\xi},\left(0, \infty\right) \right), 
\end{equation*}
where $\mathcal{TN}\left(u| \mu,\sigma^{2},\left(0, \infty\right) \right)$ is a truncated normal with location parameter $\mu$, scale parameter $\sigma$ and support $\left(0, \infty\right)$. Based on the truncated normal probability distribution,
\begin{eqnarray}\label{Oromo12}
E\left[U_{i}|\mathbf{Y}_{i}=\mathbf{y}_{i},U_{i}>0\right] &=& \frac{1}{\beta}\left[\alpha+\frac{\phi\left(\alpha\right) }{ \Phi\left(\alpha\right)}\right] ,\label{Oromo12}\\
var\left[U_{i}|\mathbf{Y}_{i}=\mathbf{y}_{i},U_{i}>0\right]&=&\frac{1}{\beta^{2}}\left[1-\alpha\frac{\phi\left(\alpha\right)}{\Phi\left(\alpha \right)}-\left(\frac{\phi\left(\alpha\right)}{\Phi\left(\alpha\right)} \right)^{2}\right],\notag\\
E\left[U_{i}^{2}|\mathbf{Y}_{i}=\mathbf{y}_{i},U_{i}>0\right] &=& \frac{1}{\beta^{2}}\left[1 +\alpha\frac{\phi\left(\alpha\right)}{\Phi\left(\alpha\right)}+\alpha^{2}\right],\label{Oromo13}
\end{eqnarray}
where 
\begin{equation*}
\alpha = \frac{\boldsymbol\xi'\boldsymbol\Omega^{-1}\left( \mathbf{y}_{i}-\boldsymbol\eta\right)}{\sqrt{1+\boldsymbol\xi'\boldsymbol\Omega^{-1}\boldsymbol\xi}},\quad \beta= \sqrt{1+\boldsymbol\xi'\boldsymbol\Omega^{-1}\boldsymbol\xi},
\end{equation*}
and $\phi\left({\cdot}\right)$ is a univariate standard normal density. The expected values:
\begin{equation*}
E\left[U_{i}|\mathbf{Y},Z_{ik}=1,\boldsymbol\Theta^{\text{old}}\right] \quad  \text{and} \quad  E\left[U^{2}_{i}|\mathbf{Y},Z_{ik}=1,\boldsymbol\Theta^{\text{old}}\right],
\end{equation*}
in equations \eqref{Oromo11} and \eqref{Oromo10} are obtained from equations \eqref{Oromo12} and \eqref{Oromo13} by replacing $\boldsymbol\eta$, $\boldsymbol\xi$, and $\boldsymbol\Omega$ with their corresponding estimates $\boldsymbol\eta_{k}^{old}$, $\boldsymbol\xi_{k}^{old}$, and $\boldsymbol\Omega_{k}^{old}$. 
The M-step of the EM-algorithm is given by 
\begin{eqnarray*}
	\boldsymbol\Theta^{\text{new}}=\arg\hspace{-0.03in}\max\limits_{\boldsymbol\Theta}\mathcal{Q}\left(\boldsymbol\Theta,\boldsymbol\Theta^{\text{old}}\right),
\end{eqnarray*}
and it is available in closed form. Using EM-algorithm, the estimates of the  parameter $\boldsymbol\Theta$ can be updated as shown in Algorithm \ref{Oromo14}.
\begin{algorithm}[H]
	\caption{EM-algorithm for SGMM.}
	\begin{algorithmic}[1]\label{Oromo14}
		\STATE Initial values of the parameters ($m$ = 0):\quad $\pi^{\left(m\right)}_{k}$,\quad $\boldsymbol\Omega_{k}^{\left(m\right)}$,\quad $\boldsymbol\eta_{k}^{\left(m\right)}$, \quad and \quad $\boldsymbol\xi_{k}^{\left(m\right)}$; 
		\STATE E-step: \label{Oromo15}
		\begin{eqnarray*}
	\gamma^{\left(m\right)}_{ik}&=& \frac{\pi^{\left(m\right)}_{k}\mathcal{SN}\left(\mathbf{y}_{i}|\boldsymbol\eta_{k}^{\left(m\right)}, \boldsymbol\Sigma_{k}^{\left(m\right)}, \boldsymbol\lambda_{k}^{\left(m\right)} \right)}{\displaystyle\sum_{j=1}^{K}\pi^{\left(m\right)}_{j}\mathcal{SN}\left(\mathbf{y}_{i}|\boldsymbol\eta_{j}^{\left(m\right)}, \boldsymbol\Sigma_{j}^{\left(m\right)}, \boldsymbol\lambda_{j}^{\left(m\right)}\right)};\\
	\vartheta^{\left(m\right)}_{ik}&=& \frac{\gamma^{\left(m\right)}_{ik}}{\beta_{k}^{\left(m\right)}}\left[\alpha_{ik}^{\left(m\right)}+\frac{\phi\left(\alpha_{ik}^{\left(m\right)}\right) }{ \Phi\left(\alpha_{ik}^{\left(m\right)}\right)}\right]; \\
	\psi^{\left(m\right)}_{ik} &=& \frac{\gamma^{\left(m\right)}_{ik}}{\left[\beta_{k}^{\left(m\right)}\right]^{2}}\left[1 +\alpha_{ik}^{\left(m\right)}\frac{\phi\left(\alpha_{ik}^{\left(m\right)}\right)}{\Phi\left(\alpha_{ik}^{\left(m\right)}\right)}+\left[ \alpha_{ik}^{\left(m\right)}\right] ^{2}\right];
\end{eqnarray*}  
		\STATE M-step:
		\begin{eqnarray*}
			\pi^{\left(m+1\right)}_{k}&=&\frac{1}{n}\displaystyle\sum_{i=1}^{n}\gamma^{\left(m\right)}_{ik};\\
			\boldsymbol\eta^{\left(m+1\right)}_{k}&=&\left[\displaystyle\sum_{i=1}^{n}\gamma^{\left(m\right)}_{ik}\right] ^{-1}\displaystyle\sum_{i=1}^{n}\left[ \gamma^{\left(m\right)}_{ik}\mathbf{y}_{i}-\vartheta^{\left(m\right)}_{ik}\boldsymbol\xi_{k}^{\left(m\right)}\right];\\
			\boldsymbol\xi_{k}^{\left(m+1\right)}&=&\left[\displaystyle\sum_{i=1}^{n}\psi^{\left(m\right)}_{ik}\right]^{-1} \displaystyle\sum_{i=1}^{n}\vartheta^{\left(m\right)}_{ik}\left[\mathbf{y}_{i}-\boldsymbol\eta_{k}^{\left(m+1\right)}\right] ;\\				 
			\boldsymbol\Omega^{\left(m+1\right)}_{k}&=&\left[\displaystyle\sum_{i=1}^{n}\gamma^{\left(m\right)}_{ik}\right]^{-1}\displaystyle\sum_{i=1}^{n}\biggl\{ \gamma^{\left(m\right)}_{ik}\left(\mathbf{y}_{i}-\boldsymbol\eta_{k}^{\left(m+1\right)}\right)\left(\mathbf{y}_{i}-\boldsymbol\eta_{k}^{\left(m+1\right)}\right)'-\\&& \vartheta^{\left(m\right)}_{ik}\biggl[\left( \mathbf{y}_{i}-\boldsymbol\eta_{k}^{\left(m+1\right)}\right) \boldsymbol\xi_{k}^{\left(m+1\right)'}+\boldsymbol\xi_{k}^{\left(m+1\right)}\left( \mathbf{y}_{i}-\boldsymbol\eta_{k}^{\left(m+1\right)}\right)' \biggr] +\\&& \psi^{\left(m\right)}_{ik}\boldsymbol\xi_{k}^{\left(m+1\right)}\boldsymbol\xi_{k}^{\left(m+1\right)'} \biggr\};
		\end{eqnarray*}
		\STATE If \emph{stopping criterion} is achieved, \emph{stop}. If not, assign $m+1$ to $m$ and go to step \ref{Oromo15};
	\end{algorithmic}
\end{algorithm}
At $m$th iteration of the E-step, we need to compute the following expressions: 
\begin{eqnarray*}
	\boldsymbol\Sigma^{\left(m\right)}_{k}&=& \boldsymbol\Omega_{k}^{\left(m\right)}+\boldsymbol\xi^{\left(m\right)}_{k}\boldsymbol\xi^{\left(m\right)'}_{k}; \\\boldsymbol\lambda^{\left(m\right)}_{k}&=&\frac{\left[ \boldsymbol\Sigma^{\left(m\right)}_{k}\right]^{-1/2}\boldsymbol\xi_{k}^{\left(m\right)}}{\sqrt{1-\boldsymbol\xi_{k}^{\left(m\right)'} \left[ \boldsymbol\Sigma^{\left(m\right)}_{k}\right]^{-1}  \boldsymbol\xi_{k}^{\left(m\right)}}};\\
	\alpha_{ik}^{\left(m\right)}&=&\frac{\boldsymbol\xi_{k}^{\left(m\right)'}\left[\boldsymbol\Omega^{\left(m\right)}_{k}\right]^{-1}\left( \mathbf{y}_{i}-\boldsymbol\eta_{k}^{\left(m\right) }\right)}{ \beta_{k}^{\left(m\right)}};\\
	\beta_{k}^{\left(m\right)} &=& \sqrt{1+\boldsymbol\xi_{k}^{\left(m\right)'}\left[ \boldsymbol\Omega_{k}^{\left(m\right)}\right] ^{-1}\boldsymbol\xi_{k}^{\left(m\right)}}.
\end{eqnarray*}

In general, EM-algorithm converges to a local optimum. As a result, different initial values for the parameters are utilized during the estimation process to select the optimal estimates. A clustering method $K$-means has been employed to initialize the location parameters, the mixing coefficients, and the dispersion matrices. This clustering method is used to partition the observations into clusters in which each observation belongs to the cluster with the nearest mean. In this study, the clusters may represent tissue types or mixture of tissue types such as white matter, blood, water, fat,  gray matter, air, cortical bone, cancellous bone, etc. We  randomly initialise the remaining  parameters of SGMM. Log-likelihood value can not be computed analytically for MRF \citep{ABO14} and therefore, we can not use it for selecting the optimal estimates. Following that we used mean squared error as criterion for selecting optimal estimates in SGMM in order to utilize the same criterion for the models used in this work. Two steps are employed during parameter estimation process. For a given number of tissue types, we estimated the parameters of the models and  repeated this step to select the optimal parameter estimates using mean squared error. We selected the number of classes or tissue types through cross-validation using mean squared error.  The stopping criterion for the convergence of the parameter estimation process is:
\begin{equation*}
 \displaystyle\max_{ik}|\gamma_{ik}^{\left(m+1\right)}-\gamma_{ik}^{\left(m\right)}|,
\end{equation*}
with an upper limit $5\times 10^{-5}$, where $m$ denotes the iteration number of EM-algorithm.
\subsection{Estimation of CT images}
\noindent 
Let a $d$-dimensional vectors $\boldsymbol\eta$, $\boldsymbol\nu =\boldsymbol\Sigma^{-1/2}\boldsymbol\lambda$,  and a $d\times d$ dispersion matrix $\boldsymbol\Sigma$ be partitioned as follows: 
\begin{equation*}
\mathbf{Y}_{i}=
\begin{array}{c}
\end{array}{\left[
	\begin{array}{c}
	Y_{i1}\\
	\hdashline
	\mathbf{Y}_{i2} 
	\end{array}
	\right]},
\quad
\boldsymbol\eta=
\begin{array}{c}
\end{array}{\left[
	\begin{array}{c}
	\eta_{1}\\
	\hdashline
	\boldsymbol\eta_{2} 
	\end{array}
	\right]},
\quad\boldsymbol\nu=
\begin{array}{c}
\end{array}{\left[
	\begin{array}{c}
	\nu_{1}\\
	\hdashline
	\boldsymbol\nu_{2} 
	\end{array}
	\right]}, \quad\text{and}
\hspace{0.05in}\boldsymbol\Sigma=
\begin{array}{c}
\end{array}{
	\left[
	\begin{array}{c;{2pt/2pt}c}
	\Sigma_{11} & \boldsymbol\Sigma_{12} \\
	\hdashline
	\boldsymbol\Sigma_{21} & \boldsymbol\Sigma_{22}
	\end{array}
	\right]}.
\end{equation*}
The dimension of  $Y_{i1}$, $\eta_{1}$ , $\nu_{1}$, and $\Sigma_{11}$ is a $1\times 1$. The random variable  $Y_{i1}$ represents the $i^{th}$ voxel in CT image, and the random vector $\mathbf{Y}_{i2}$ denotes  the corresponding voxel in MR images.
If we assume that $\mathbf{Y}_{i}$ follows SGMM: 
\begin{equation}\label{ABOWBO1}
\mathbf{Y}_{i}\sim \mathcal{SN}\left(\mathbf{y}_{i}|\boldsymbol\eta, \boldsymbol\Sigma, \boldsymbol\lambda\right),
\end{equation}
then 
\begin{equation*}
\mathbf{Y}_{i2}\sim\mathcal{SN}\left(\mathbf{y}_{i2}|\boldsymbol\eta_{2}, \boldsymbol\Sigma_{22}, \boldsymbol\tau\right),
\end{equation*} 
where  
\begin{equation*}
\boldsymbol\tau=\boldsymbol\Sigma_{22}^{1/2}\left( \frac{\boldsymbol\Sigma_{22}^{-1}\boldsymbol\Sigma_{21}\nu_{1}+\boldsymbol\nu_{2}}{\sqrt{1+\nu_{1}\Sigma^{c}_{11}\nu_{1}}}\right)\quad\text{with}\quad\Sigma^{c}_{11}=\Sigma_{11}-\boldsymbol\Sigma_{12}\boldsymbol\Sigma^{-1}_{22}\boldsymbol\Sigma_{21}.
\end{equation*}
According to  Khounsiavash et al. \citep{ABO13}, if equation \eqref{ABOWBO1} holds, then the probability density function of $Y_{i1}|\mathbf{Y}_{i2}=\mathbf{y}_{i2}$ can be given by
\begin{equation}\label{Oromo20}
f\left(y_{i1}|\mathbf{y}_{i2}\right) = \mathcal{N}\left(y_{i1}|\eta^{c}_{1}, \Sigma^{c}_{11}\right)\frac{\Phi\left(\boldsymbol\lambda'\boldsymbol\Sigma^{-1/2}\left(\mathbf{y}_{i}-\boldsymbol\eta \right)\right)}{\Phi\left(\boldsymbol\tau'\boldsymbol\Sigma_{22}^{-1/2}\left(\mathbf{y}_{i2}-\boldsymbol\eta_{2} \right)\right)},
\end{equation}
where
\begin{equation*}
\eta^{c}_{1}=\eta_{1}+\boldsymbol\Sigma_{12}\boldsymbol\Sigma^{-1}_{22}\left(\mathbf{y}_{i2}- \boldsymbol\eta_{2}\right) \quad\text{and}\quad\Sigma^{c}_{11}=\Sigma_{11}-\boldsymbol\Sigma_{12}\boldsymbol\Sigma^{-1}_{22}\boldsymbol\Sigma_{21}.
\end{equation*}
\begin{proposition}
Using equation \eqref{Oromo20}, the expected value of $Y_{i1}|\mathbf{Y}_{i2}=\mathbf{y}_{i2}$  can be given by 
\begin{equation}\label{Oromo16}
E\left[Y_{i1}|\mathbf{Y}_{i2}=\mathbf{y}_{i2}\right] = \eta^{c}_{1}+\frac{\Sigma^{c}_{11}\nu_{1}}{\sqrt{1+\nu_{1}\Sigma^{c}_{11}\nu_{1}}}\frac{\phi\left(\boldsymbol\tau'\boldsymbol\Sigma_{22}^{-1/2}\left(\mathbf{y}_{i2}-\boldsymbol\eta_{2} \right)\right)}{\Phi\left(\boldsymbol\tau'\boldsymbol\Sigma_{22}^{-1/2}\left(\mathbf{y}_{i2}-\boldsymbol\eta_{2} \right)\right)}.
\end{equation}
\end{proposition}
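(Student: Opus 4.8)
The plan is to obtain \eqref{Oromo16} by integrating $y_{i1}$ directly against the conditional density \eqref{Oromo20}. Since $\boldsymbol\tau$, $\boldsymbol\Sigma_{22}$ and $\boldsymbol\eta_{2}$ do not depend on $y_{i1}$, the factor $\Phi\!\left(\boldsymbol\tau'\boldsymbol\Sigma_{22}^{-1/2}(\mathbf{y}_{i2}-\boldsymbol\eta_{2})\right)$ in the denominator of \eqref{Oromo20} is constant in this integral, so that
\begin{equation*}
E\left[Y_{i1}\mid\mathbf{Y}_{i2}=\mathbf{y}_{i2}\right]=\frac{1}{\Phi\!\left(\boldsymbol\tau'\boldsymbol\Sigma_{22}^{-1/2}(\mathbf{y}_{i2}-\boldsymbol\eta_{2})\right)}\int_{-\infty}^{\infty}y_{i1}\,\mathcal{N}\!\left(y_{i1}\mid\eta^{c}_{1},\Sigma^{c}_{11}\right)\Phi\!\left(\boldsymbol\lambda'\boldsymbol\Sigma^{-1/2}\left(\mathbf{y}_{i}-\boldsymbol\eta\right)\right)dy_{i1}.
\end{equation*}
The first step is to write the argument of the remaining $\Phi$ as an affine function of $y_{i1}$: with $\boldsymbol\nu=\boldsymbol\Sigma^{-1/2}\boldsymbol\lambda$ and the stated partitions of $\boldsymbol\nu$, $\mathbf{y}_{i}$ and $\boldsymbol\eta$, one has $\boldsymbol\lambda'\boldsymbol\Sigma^{-1/2}(\mathbf{y}_{i}-\boldsymbol\eta)=\boldsymbol\nu'(\mathbf{y}_{i}-\boldsymbol\eta)=b\,y_{i1}+a$ with $b=\nu_{1}$ and $a=\boldsymbol\nu_{2}'(\mathbf{y}_{i2}-\boldsymbol\eta_{2})-\nu_{1}\eta_{1}$.

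The second step is to invoke two elementary Gaussian identities: for $X\sim\mathcal{N}(\mu,\sigma^{2})$ and scalars $a,b$,
\begin{equation*}
E\left[\Phi(a+bX)\right]=\Phi\!\left(\frac{a+b\mu}{\sqrt{1+b^{2}\sigma^{2}}}\right),\qquad E\left[X\,\Phi(a+bX)\right]=\mu\,\Phi\!\left(\frac{a+b\mu}{\sqrt{1+b^{2}\sigma^{2}}}\right)+\frac{b\sigma^{2}}{\sqrt{1+b^{2}\sigma^{2}}}\,\phi\!\left(\frac{a+b\mu}{\sqrt{1+b^{2}\sigma^{2}}}\right).
\end{equation*}
The first follows by writing $\Phi(a+bX)=P(W\le a+bX\mid X)$ for an independent $W\sim\mathcal{N}(0,1)$; the second by combining this with Stein's identity $E[Xg(X)]=\mu E[g(X)]+\sigma^{2}E[g'(X)]$ and the convolution $\int\phi(c+dz)\phi(z)\,dz=(1+d^{2})^{-1/2}\phi\!\left(c/\sqrt{1+d^{2}}\right)$. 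Taking $X=Y_{i1}$, $\mu=\eta^{c}_{1}$, $\sigma^{2}=\Sigma^{c}_{11}$, $b=\nu_{1}$ yields $1+b^{2}\sigma^{2}=1+\nu_{1}\Sigma^{c}_{11}\nu_{1}$, so the integral above equals $\eta^{c}_{1}\Phi(w)+\dfrac{\nu_{1}\Sigma^{c}_{11}}{\sqrt{1+\nu_{1}\Sigma^{c}_{11}\nu_{1}}}\,\phi(w)$ with $w=(a+b\mu)/\sqrt{1+b^{2}\sigma^{2}}$; the same identities also recover the denominator of \eqref{Oromo20} as $\int\mathcal{N}(y_{i1}\mid\eta^{c}_{1},\Sigma^{c}_{11})\Phi(a+by_{i1})\,dy_{i1}=\Phi(w)$, confirming that \eqref{Oromo20} is a proper density.

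The step I expect to be the main obstacle is the algebraic identification $w=\boldsymbol\tau'\boldsymbol\Sigma_{22}^{-1/2}(\mathbf{y}_{i2}-\boldsymbol\eta_{2})$. Substituting $\eta^{c}_{1}=\eta_{1}+\boldsymbol\Sigma_{12}\boldsymbol\Sigma_{22}^{-1}(\mathbf{y}_{i2}-\boldsymbol\eta_{2})$ into $a+b\mu$ cancels the $\nu_{1}\eta_{1}$ terms and, using $\boldsymbol\Sigma_{12}'=\boldsymbol\Sigma_{21}$, the symmetry of $\boldsymbol\Sigma_{22}^{-1}$ and that $\nu_{1}$ is a scalar, leaves $a+b\mu=\left(\boldsymbol\Sigma_{22}^{-1}\boldsymbol\Sigma_{21}\nu_{1}+\boldsymbol\nu_{2}\right)'(\mathbf{y}_{i2}-\boldsymbol\eta_{2})$. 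On the other hand, the definition of $\boldsymbol\tau$ and the symmetric square root convention for $\boldsymbol\Sigma_{22}$ give $\boldsymbol\tau'\boldsymbol\Sigma_{22}^{-1/2}=\left(\boldsymbol\Sigma_{22}^{-1}\boldsymbol\Sigma_{21}\nu_{1}+\boldsymbol\nu_{2}\right)'/\sqrt{1+\nu_{1}\Sigma^{c}_{11}\nu_{1}}$, so the two expressions for $w$ agree. Dividing the value of the integral by $\Phi(w)$ then produces \eqref{Oromo16}, using that $\Sigma^{c}_{11}$ and $\nu_{1}$ are scalars so $\nu_{1}\Sigma^{c}_{11}=\Sigma^{c}_{11}\nu_{1}$. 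As an alternative one could observe that \eqref{Oromo20} is a univariate extended (\emph{unified}) skew-normal density and simply quote its known first moment, but the direct computation above is self-contained.
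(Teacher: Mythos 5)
Your proposal is correct, and it reaches \eqref{Oromo16} by a somewhat different route than the paper. The paper's proof replaces the factor $\Phi\left(\boldsymbol\lambda'\boldsymbol\Sigma^{-1/2}(\mathbf{y}_{i}-\boldsymbol\eta)\right)$ by the integral $\int_{0}^{\infty}\mathcal{N}\left(x\mid\nu_{1}y_{i1}+B,1\right)dx$ with $B=\boldsymbol\nu_{2}'(\mathbf{y}_{i2}-\boldsymbol\eta_{2})-\nu_{1}\eta_{1}$ (your $a$), swaps the order of integration, factors the product of the two Gaussian densities as a marginal in $x$ times a conditional in $y_{i1}$, and finishes by citing moments of the truncated normal; you instead apply the closed-form identities for $E\left[\Phi(a+bX)\right]$ and $E\left[X\,\Phi(a+bX)\right]$ with $X\sim\mathcal{N}(\mu,\sigma^{2})$, deriving the second via Stein's identity and a Gaussian convolution. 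These are two organizations of the same underlying computation (your derivation of the first identity via an independent $W\sim\mathcal{N}(0,1)$ is precisely the paper's latent-variable device), but your version is the more self-contained of the two: you carry out explicitly both the normalization check $\int\mathcal{N}(y_{i1}\mid\eta^{c}_{1},\Sigma^{c}_{11})\Phi(a+by_{i1})\,dy_{i1}=\Phi(w)$ and the algebraic identification $w=\boldsymbol\tau'\boldsymbol\Sigma_{22}^{-1/2}(\mathbf{y}_{i2}-\boldsymbol\eta_{2})$, both of which the paper leaves implicit in the phrase ``well known properties of the truncated normal distribution and inverse matrix adjustment formula.'' The paper's latent-variable decomposition has the advantage of matching the hierarchical representation \eqref{Oromia7} used throughout the EM derivation, which makes the auxiliary variable $X$ conceptually meaningful rather than a pure computational device; your approach buys a shorter, fully explicit verification that also confirms \eqref{Oromo20} integrates to one. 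One small caveat: your identification of $\boldsymbol\tau'\boldsymbol\Sigma_{22}^{-1/2}$ relies on the symmetric square-root convention $(\boldsymbol\Sigma_{22}^{1/2})'=\boldsymbol\Sigma_{22}^{1/2}$, which you correctly flag and which is consistent with the paper's usage.
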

\begin{proof}
Let $\kappa\left(\mathbf{y}_{i2}\right) =  \Phi\left(\boldsymbol\tau'\boldsymbol\Sigma_{22}^{-1/2}\left(\mathbf{y}_{i2}-\boldsymbol\eta_{2} \right)\right)$ and $B = \boldsymbol\nu_{2}' \left(\mathbf{y}_{i2} - \boldsymbol\eta_{2}\right) - \nu_{1}\eta_{1}$. The expected value of $Y_{i1}|\mathbf{Y}_{i2}=\mathbf{y}_{i2}$ can be given by 
\begin{eqnarray*}
E\left[Y_{i1}|\mathbf{Y}_{i2}=\mathbf{y}_{i2}\right] &=& \displaystyle\int_{\Bbb R}y_{i1}f\left(y_{i1}\mid\mathbf{y}_{i2}\right)dy_{i1},\notag\\&=& \frac{1}{\kappa\left(\mathbf{y}_{i2}\right)}\displaystyle\int_{\Bbb R}y_{i1}\mathcal{N}\left(y_{i1}\mid\eta^{c}_{1}, \Sigma^{c}_{11}\right)\Phi\left(\boldsymbol\lambda'\boldsymbol\Sigma^{-1/2}\left(\mathbf{y}_{i}-\boldsymbol\eta \right)\right)dy_{i1},
\\&=& \frac{1}{\kappa\left(\mathbf{y}_{i2}\right)}\displaystyle\int_{\Bbb R}\int_{0}^{\infty}y_{i1}\mathcal{N}\left(y_{i1}\mid\eta^{c}_{1}, \Sigma^{c}_{11}\right)\mathcal{N}\left(x\mid \nu_{1}y_{i1}+B, 1\right)dxdy_{i1},
\\&=& \frac{1}{\kappa\left(\mathbf{y}_{i2}\right)}\int_{0}^{\infty}\mathcal{N}\left(x\mid B +  \nu_{1}\eta^{c}_{1}, 1 + \nu_{1}\Sigma^{c}_{11}\nu_{1}\right)E[Y_{i1}|X=x]dx,
\end{eqnarray*} 
where 
\begin{eqnarray*}
Y_{i1}|X=x &\sim& \mathcal{N}\left(x\mid \eta^{c}_{1} + \Lambda\nu_{1}\left(x - B - \nu_{1}\eta^{c}_{1}\right),   \Lambda\right) \quad\text{with} \quad\Lambda = \left(\frac{1}{\Sigma^{c}_{11}} + \nu^{2}_{1}\right)^{-1}.
\end{eqnarray*} 
The result in equation \eqref{Oromo16} follows from well known properties of the truncated normal distribution and inverse matrix adjustment formula. 
\end{proof}
Using equation \eqref{Oromo16}, we can obtain the point estimator of  $Y_{i1}$  by
\begin{eqnarray*}
	E\left[Y_{i1}|\mathbf{Y}_{i2}\right] &=&E\left[E\left[Y_{i1}|\mathbf{Y}_{i2}, Z_{i}\right]|\mathbf{Y}_{i2}\right],\\
	&=&\sum_{k=1}^{K}P\left(Z_{i} =k|\mathbf{Y}_{i2}\right) E\left[Y_{i1}|\mathbf{Y}_{i2}, Z_{i} =k\right]. 
\end{eqnarray*}
In this framework, the latent variable $Z_{i}$ represents the underlying tissue classes.
The weight $P\left(Z_{i} =k|\mathbf{Y}_{i2}\right)$ can be computed using Bayes' theorem. The expected value  $ E\left[Y_{i1}|\mathbf{Y}_{i2}, Z_{i} =k\right]$ is obtained from equation \eqref{Oromo16} by indexing the parameters with $k$.  
\subsection{Model validation}
The main focus of this work is to investigate CT image predictive quality of SGMM and GMM by partitioning the data into two major tissue types. It is also aimed at comparing the estimation quality of the partitioning approach with GMM, HMM, and MRF on the bone tissues. 

We used leave-one-out cross-validation method to compare the predictive quality of the models. That is, we keep one head to validate the models and use the remaining heads for training the models. Using the threshold CT image intensity, CT image in a validation dataset is partitioned into non-bone and bone tissues.  Let ${Y}^{\left(t\right)}_{i}$  and $\hat{Y}^{\left(t\right)}_{i}$ be CT image and its corresponding estimated CT image intensities in a given tissue region $t$.  One can use the square and the absolute loss functions to assess the estimation cost. They can be given by
\begin{eqnarray*}
	\left(\hat{Y}^{\left(t\right)}_{i}- {Y}^{\left(t\right)}_{i}\right)^2 \quad \text{and} \quad |\hat{Y}^{\left(t\right)}_{i}- {Y}^{\left(t\right)}_{i}|,
\end{eqnarray*} 
 where $t = 1, 2.$ Since the mean square error heavily weights the outliers, the mean absolute error (MAE) can be employed as a main tool to evaluate the estimation performance of the models.   The  MAE can be given by 
\begin{eqnarray*}
	MAE_{t}=\frac{1}{n_{t}}\displaystyle\sum_{i=1}^{n_{t}}|\hat{Y}^{\left(t\right)}_{i}- {Y}^{\left(t\right)}_{i}|,
\end{eqnarray*}
where $n_{t}$ is the number of voxels in partition $t$.

The peak signal-to-noise ratio (PSNR) can also be used to quantify the overall quality of the estimation. It takes square loss function into account through the mean squared error. The PSNR may be given by
\begin{eqnarray*}
	PSNR=10\log_{10} \left(\frac{nM^{2}}{\displaystyle\sum_{i=1}^{n}\left( \hat{Y}_{i}- {Y}_{i}\right)^{2}}\right),  
\end{eqnarray*}
where  $Y_{i}$ and $\hat{Y}_{i}$ represent CT image and the estimated CT image intensities at voxel $i$, respectively, and $M$ is the  maximal intensity in CT image. We exploited patient-wise leave-one-out cross-validation and  mean absolute error to compare the estimation quality of the models. One can also use peak signal-to-noise ratio to compare the estimation performance of the models. The better model has lower MAE and higher PSNR. Since MAE and PSNR are crude estimation quality measures, we utilised smoothed residual and absolute residual plots  to further evaluate the estimation quality of the models through the tissues of the heads. A moving average over non-overlapping windows in CT image intensities can be used as a main tool to investigate and identify the model that outperforms in bone tissue estimation.  Over the non-overlapping windows on ${Y}_{i}$ with a window size of 20 HU, the averages for ${Y}_{i}$,  $\hat{Y}_{i}- {Y}_{i}$, and $|\hat{Y}_{i}- {Y}_{i}|$ over the windows can be computed to obtain the smoothed residual and absolute residual plots.  

In addition to the above model performance evaluation methods, we assessed the performance of the models using Bland-Altman plot \citep{ABO32, ABO33}, which is a graphical method to compare two measurements by plotting the differences between the two measurements $\hat{Y}_{i}- {Y}_{i}$ against their averages $( \hat{Y}_{i} + {Y}_{i})/2$ .  If one of the two  measurements is a reference measurement, then the differences can be plotted against the reference measurement \citep{ABO34} and that coincides with the smoothed residual plot via cross-validation. The main advantage of the Bland-Altman plot is that it reveals a relationship between the differences and the magnitude of the measurements, to examine possible systematic bias, and outliers.

We can also complement the MAE based performance evaluation of the methods by Wilcoxon signed-rank test,  which is a non-parametric alternative to the paired Student's t-test. To compare SGMM to every one of the remaining methods, the paired method-method differences in MAEs were examined by using the one-sided Wilcoxon signed rank test, with the null hypothesis that SGMM is the same as one of the other methods with respect to the MAE values and the research hypothesis that SGMM is better than the method being compared.
\section{Results}
A CT image intensity 100HU was utilized as a delimiting value  during CT image estimation. The optimal parameter estimates were received for $K=8$ in GMM and $K=5$ for both HMM and MRF. In the case of SGMM and GMM*, we have received the optimal parameter estimates for $K=6$ for both major tissue types. Table~\ref{tab:Bone_region} demonstrates a summary of mean absolute errors for the bone tissues  and p-values of the Wilcoxon signed-rank test.

\begin{table}[H]
\tbl{Mean absolute errors and p-values  for bone tissues }
{\begin{tabular}{c|cccccc} \toprule
 & \multicolumn{2}{l}{Model} \\
 \cmidrule{2-6}
 Head &SGMM  &GMM$^{*}$ &HMM  &MRF  &GMM \\ \midrule
  1&316.25 &315.16& 324.94& 307.95& 314.41 \\ 
  2&349.52& 348.05&  360.03&  328.89& 365.12\\ 
  3&303.58& 301.78&  331.16&  322.48&  328.01\\ 
  4&272.21 &269.33&  296.04&  280.63&  292.78\\ 
  5&350.41&349.28&  366.13&  357.22&  359.56\\\bottomrule
Average&318.39& 316.72& 335.66 &319.43 &331.98\\\midrule
p-value&-& 0.985&0.030  &0.500 &0.053\\\bottomrule
\end{tabular}}
\label{tab:Bone_region} 
\end{table}

The rows of the table represent validation datasets. The table presents mean absolute errors received for the models.  In terms of MAE, it is apparent that for each head the partitioning approach (SGMM and GMM*) and MRF had better performance than HMM and GMM. Moreover, the differences of the average MAEs show that the partitioning approach and MRF achieved better results than HMM and GMM. For instance, SGMM  outperforms HMM with the method-to-method difference of average MAEs -17.27 and the standard deviation of the  method-to-method differences of MAEs 8.23. The p-values of Wilcoxon signed-rank test show that SGMM outperforms HMM  and GMM  significantly on bone tissues.

 The results in Table~\ref{tab:Dense_Bone_region}  show that the partitioning approach has noticeable outperformance on dense bone tissues (approximately with CT image intensities greater than 900HU according to  Washington and Leaver \citep{ABO20}) as compared to the remaining models.  Even though MRF is computationally expensive, it had better performance than HMM and GMM on dense bone tissues. The  p-values of Wilcoxon signed-rank test reveal that SGMM has  significantly better performance than the remaining methods except for  GMM$^{*}$ on dense bone tissues.

\begin{table}[H]
	\tbl{Mean absolute errors  and p-values for dense bone tissues}
	{\begin{tabular}{c|cccccc} \toprule
			& \multicolumn{2}{l}{Model} \\
			\cmidrule{2-6}
			Head &SGMM  &GMM$^{*}$ &HMM  &MRF  &GMM \\ \midrule
			1   &365.68 &361.27&418.99&401.43 &407.85 \\ 
			2	&406.09 &403.65&458.23&394.89 &492.96\\ 
			3	&331.85 & 328.44 &407.01& 404.80 & 386.72\\ 
			4	&236.89 & 232.15 & 290.42& 295.46 & 290.77\\ 
			5   &436.42 & 433.84 & 509.20& 488.10 & 505.49\\ \bottomrule
			Average&355.39&351.87& 416.77&396.94&416.76\\	\midrule
			p-value&-&0.985 & 0.030 & 0.053 &0.030\\\bottomrule
		\end{tabular}}
		\label{tab:Dense_Bone_region}
	\end{table}
	
Table~\ref{tab:Non_Bone_region} demonstrates the estimation quality of the models on non-bone tissues. The best  result was received for HMM. However, there is already  a good contrast between soft tissues and air on MR images. The remaining models had similar behavior on non-bone tissues.

\begin{table}[H]
	\tbl{Mean absolute errors  and p-values for non-bone tissues}
	{\begin{tabular}{c|cccccc} \toprule
			& \multicolumn{2}{l}{Model} \\
			\cmidrule{2-6}
			Head &SGMM  &GMM$^{*}$ &HMM  &MRF  &GMM \\ \midrule
		1	& 111.96& 112.67&97.75&106.15& 114.10\\ 
		2	&116.56& 117.53&99.69& 103.60&116.12\\ 
		3	&124.65& 125.46&94.93&117.31& 122.00\\ 
		4	&116.61&117.19&101.54& 112.52&111.76\\ 
		5	&122.08& 122.42&98.87&126.51&118.19\\\bottomrule
		Average	&118.37& 119.05& 98.55&113.22& 116.43\\\midrule
		p-value&-& 0.030&0.985  & 0.947& 0.947\\\bottomrule
		\end{tabular}}
		\label{tab:Non_Bone_region}
	\end{table}
	
Table~\ref{tab:Overall} presents the overall summary of CT image estimation quality. In comparison to the other models, we received better result for HMM. The reason is that HMM outperformed the other models on non-bone tissues. However, this is not the main interest in this work. 

\begin{table}[H]
	\tbl{Combined mean absolute errors  and p-values}
	{\begin{tabular}{c|cccccc} \toprule
			& \multicolumn{2}{l}{Model} \\
			\cmidrule{2-6}
			Head &SGMM  &GMM$^{*}$ &HMM  &MRF  &GMM \\ \midrule
		1	& 144.15 &144.58&133.56&137.95& 145.67\\ 
		2	& 151.44 &152.04& 138.68 & 137.34&153.40 \\ 
		3	& 161.00 &161.28&142.92 & 158.99 & 163.85 \\ 
		4	& 146.71 &146.62& 139.17 &  145.05 &146.78\\ 
		5	&160.53 & 160.62&  143.89 &165.36&158.85\\\bottomrule
		Average  &152.77& 153.03&  139.65  &148.94 & 153.71\\\midrule
		p-value&-& 0.069& 0.985 &0.911 & 0.140\\\bottomrule
		\end{tabular}}
	  \label{tab:Overall} 
	\end{table}
	
Table~\ref{tab:PSNR} demonstrates the prediction accuracy of the models in terms of PSNR. The results show that the models had similar behavior. On the bone tissues, PSNR has shown that the models have similar estimation performance.

\begin{table}[H]
	\tbl{Evaluation of the methods based on PSNR and p-values}
	{\begin{tabular}{c|cccccc} \toprule
			& \multicolumn{2}{l}{Model} \\
			\cmidrule{2-6}
			Head &SGMM  &GMM$^{*}$ &HMM  &MRF  &GMM \\ \midrule
		1	&19.92&19.90&20.29& 20.30& 20.29\\ 
		2	&19.56 &19.50& 20.23 & 20.40&20.15 \\ 
		3	&19.75 &19.71&20.13 & 20.11 & 19.69 \\ 
		4	&19.76 &19.76& 20.43 & 20.46 &20.69\\ 
		5	&19.59 &19.58& 19.78 &19.63&19.66\\\bottomrule
		Average   &19.72&19.69&  20.17  &20.18 & 20.09\\\midrule
		p-value&-&0.978 & 0.030  &0.030 &0.053\\\bottomrule
		\end{tabular}}
		\label{tab:PSNR} 
\end{table}

On the basis of average, mean absolute error was utilized to compare CT image prediction accuracy of the models. Smoothed absolute residual plots were also employed to assess the estimation quality of the models. In comparison to MAE, smoothed absolute residual plots are powerful to evaluate the estimation performance of the models through the tissues of the heads.  Figure~\ref{Or3} presents smoothed absolute residual plots for the models. The absolute residuals were averaged over non-overlapping windows in CT image intensities with window size 20HU. 

\begin{figure}[H]
	\centering
	\includegraphics[height=8cm, width=13cm]{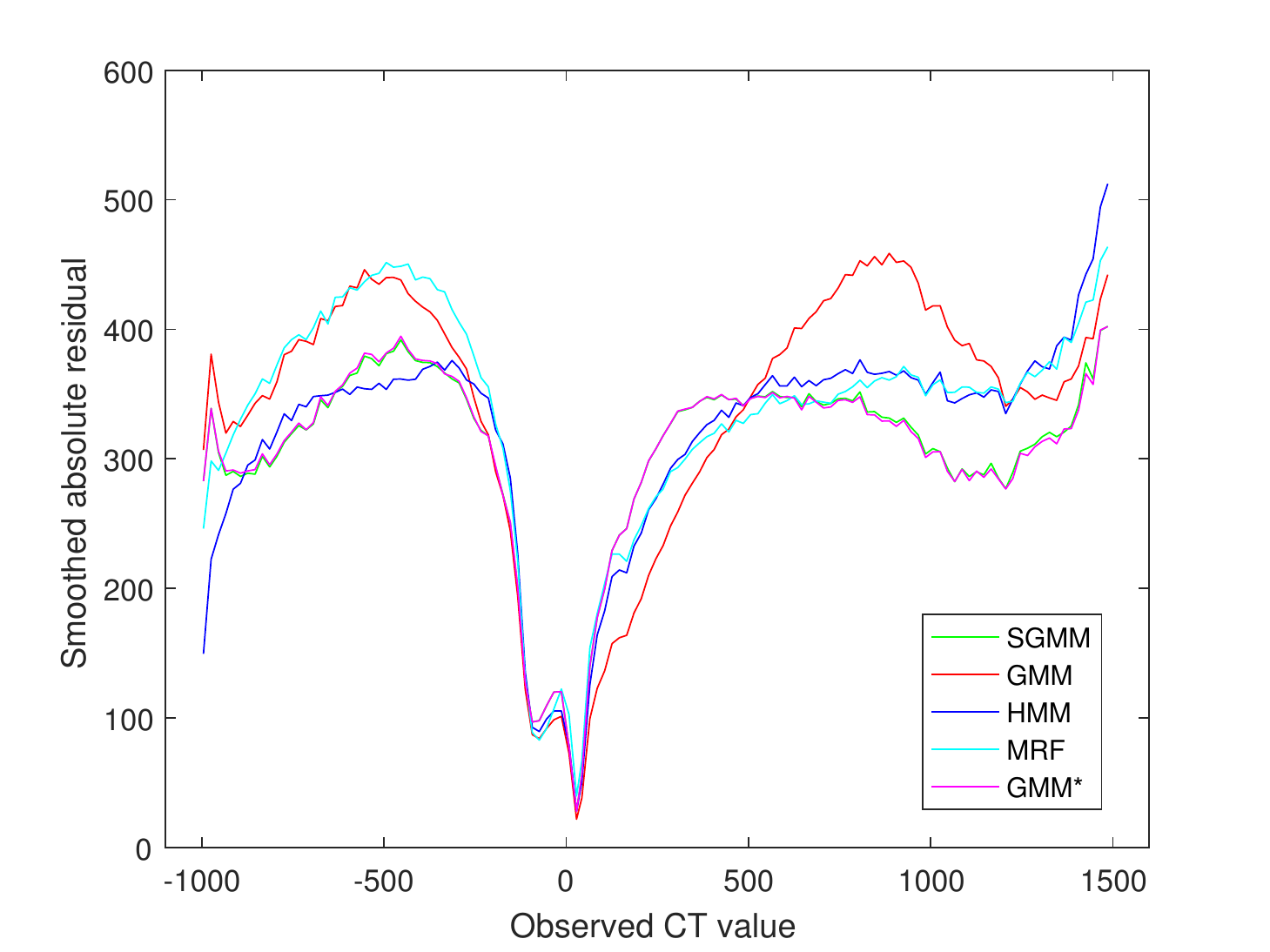}
	\caption{Smoothed absolute residual plot for the five patients}
	\label{Or3}
\end{figure}

It is clear from Figure \ref{Or3} that none of the models outperformed throughout the tissues of the head. However, SGMM and GMM$^{*}$ outperformed the other models on dense bone tissues. Figure \ref{Or4} shows smoothed residual plot. The deviations of observed CT image intensities  from the estimated CT image intensities  were exploited to obtain average residuals over non-overlapping windows in CT image intensities with window size 20HU. It is evident from the plot that bone tissues  were underestimated. However, the partitioning approach has improved underestimation of bone tissues. 

\begin{figure}[H]
	\centering
	\includegraphics[height=8cm, width=13cm]{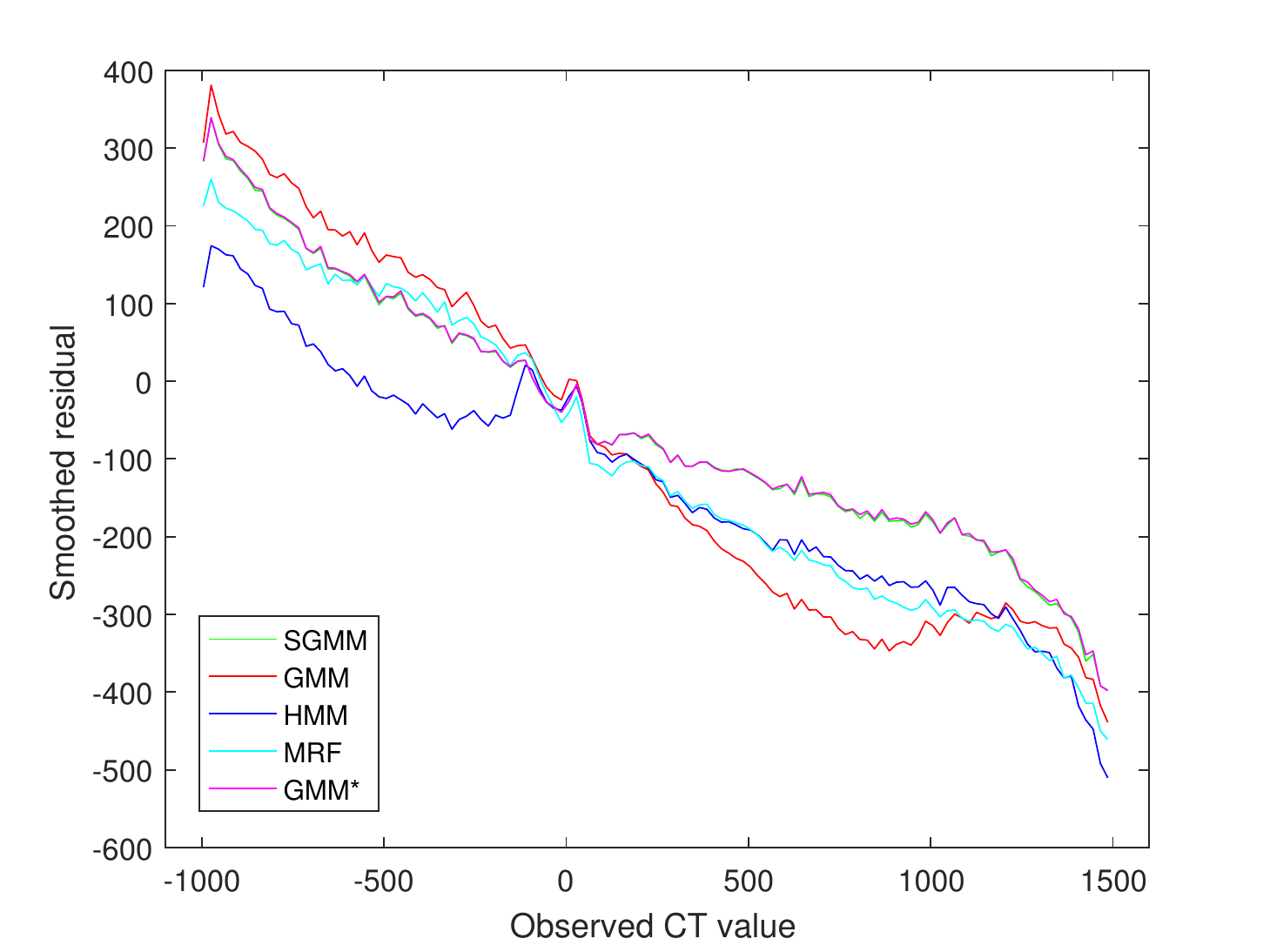}
	\caption{Smoothed residual plot for the five patients}
	\label{Or4}
\end{figure} 
The Bland-Altman plots of the methods are shown in Figure \ref{BlandAltman}. It shows the bias of the methods in estimating the CT images. The bias is higher for higher CT image values. However, the partitioning approach has improved the bias in bone tissue estimation as compared to the remaining methods. The Bland-Altman plots also show higher variability on lower CT image values.  In comparison to \ref{Or4}, we observe that the higher variability on lower CT image values  in Figure \ref{BlandAltman} is due the estimated CT image values. Notice, however, that this work is mainly concerned with improving bone tissue estimation.

\begin{figure}[H]
	\centering
	\includegraphics[height=8cm, width=13cm]{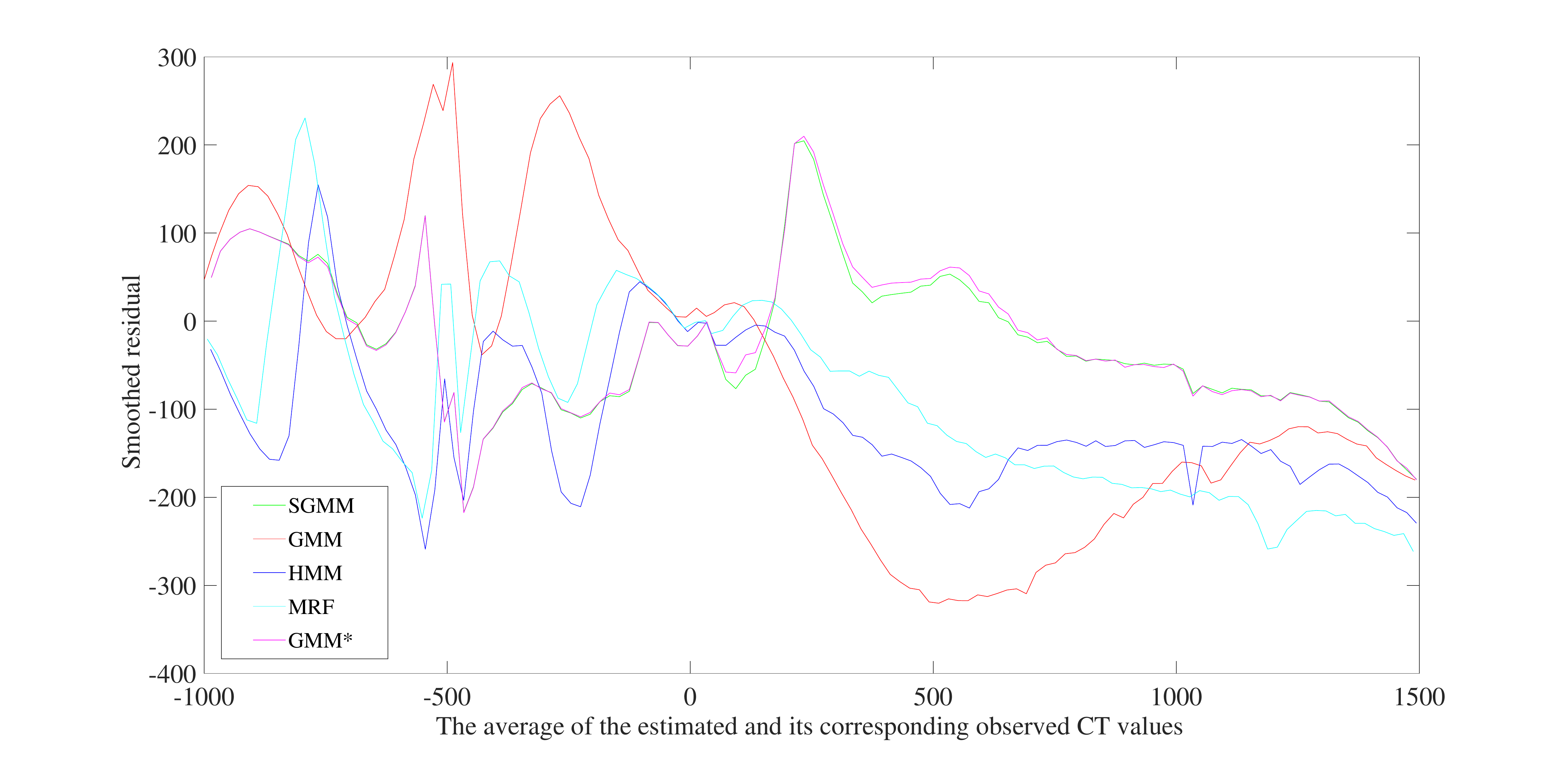}
	\caption{Bland-Altman plot for the five patients}
	\label{BlandAltman}
\end{figure}

The partitioning approach had better performance in tracing bone tissues. Bone tissues appear as bright white on CT images, see Figure \ref{Or45}.  The figure shows slices of CT image and its corresponding predicted slices for a representative patient. The first row in the figure where the images are marked by red colored box clearly shows that the partitioning approaches are better in identifying bone tissues. 

\begin{figure}[H]
	\vspace*{-0.33cm}
	\centering
	\hspace*{-2.3cm}
	\includegraphics[height=9cm, width=18cm]{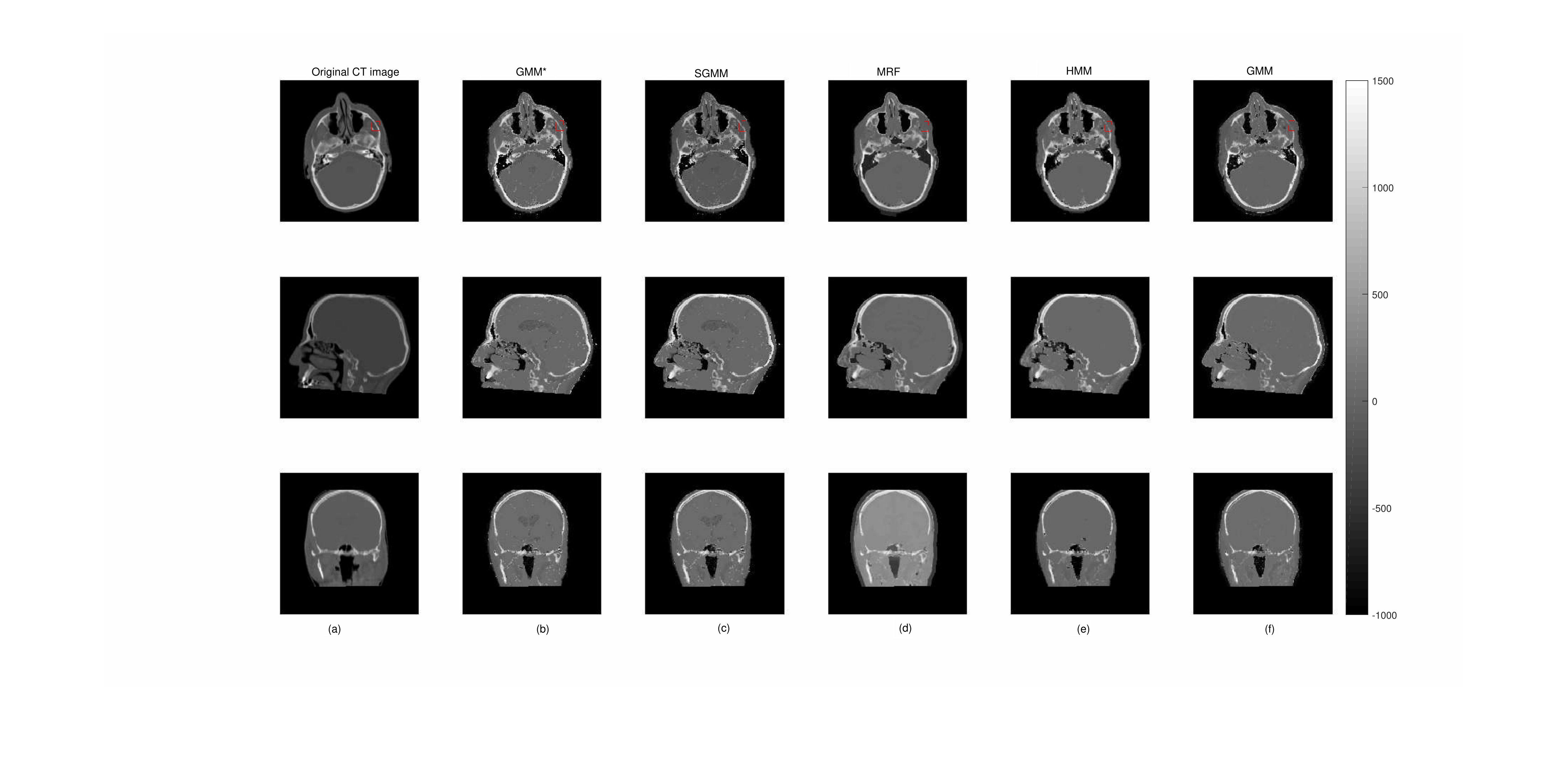}
	\vspace*{-1.45cm}
	\small\caption{The first column (a) presents the original CT image slices and the remaining columns (b-f) show the corresponding  predicted slices of CT image.}
	\label{Or45}
\end{figure}
We presented the prediction errors in Figure \ref{Or45}, which corresponds to the predicted slices in Figure \ref{Or65}. It can be seen that the images of the prediction errors corresponding to the bone tissues appear darker for the partitioning approach:  GMM$^{*}$ and SGMM.

\begin{figure}[H]
	\vspace*{-0.1cm}
	\centering
	\hspace*{-2.3cm}
	\includegraphics[height=9cm, width=18cm]{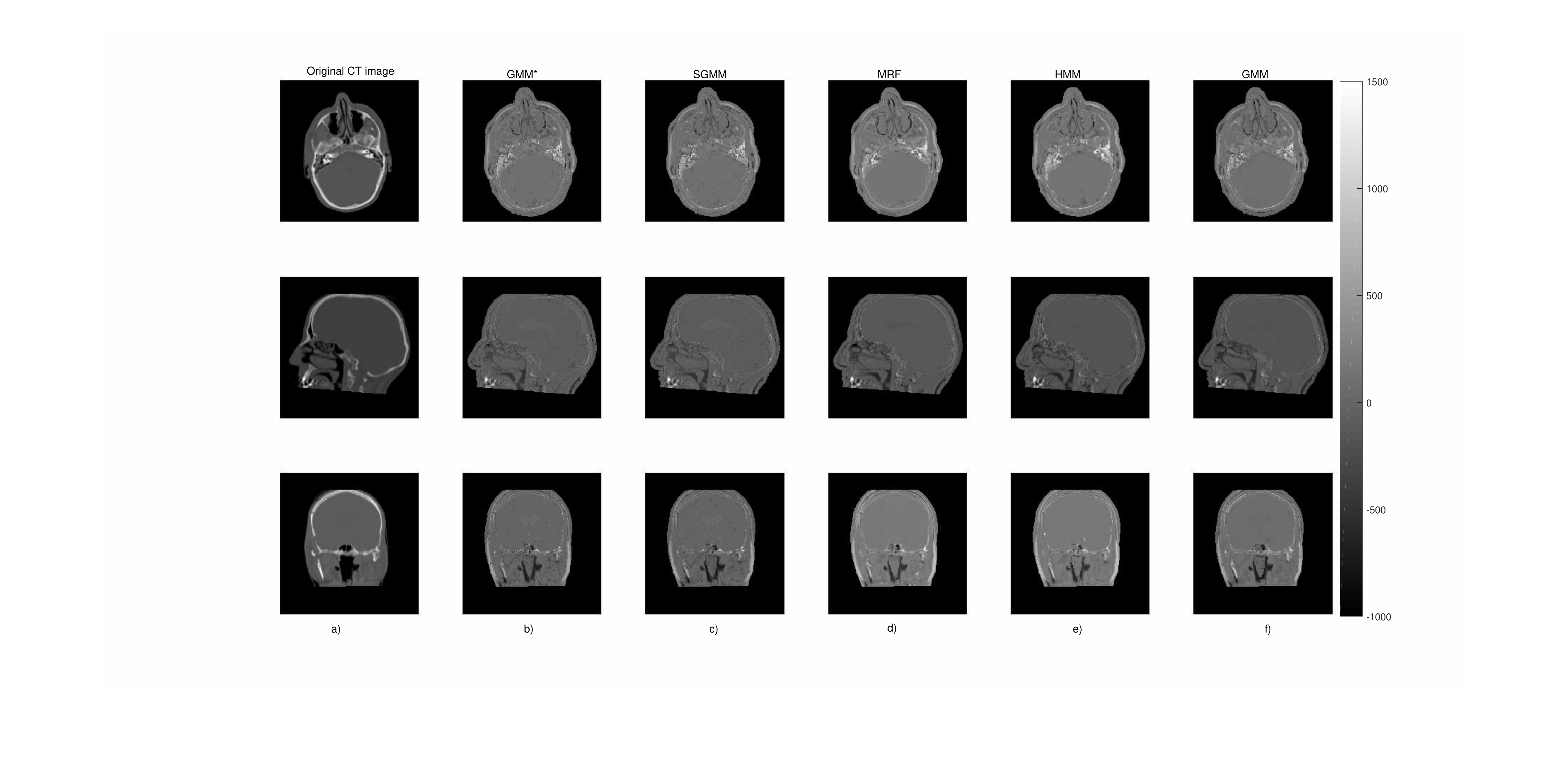}
	\vspace*{-1.45cm}
	\small\caption{The first column (a) presents the original CT image slices and the remaining columns (b-f) show the prediction errors for each model.}
	\label{Or65}
\end{figure}
\section{Discussion}
Statistical model for voxel-wise CT image estimation from MR images have been presented and evaluated using cross-validation on five datasets. Kuljus et al. \citep{ABO14} and Johansson et al. \citep{ABO10} have used  voxel-wise estimation approach to study CT image estimations. Existing works suggest that the estimation quality on the bone tissues is poor. This study was aimed at probing the estimation of CT images by partitioning the data into two major tissue types: non-bone and bone tissues. Specifically, the focus of the study was to obtain improved bone tissue estimations. We proposed SGMM to relax the  distributional assumption utilised by  Kuljus et al. \citep{ABO14} and Johansson et al. \citep{ABO10}. The model was motivated to take the asymmetrical distribution that could arise from the partitioning of the data into account. We also used GMM in addition to SGMM to examine the effect of the distributional nature of the partitioned data on the estimation process. The study was also aimed at comparing CT image estimation performance of SGMM, HMM, GMM, and  MRF on the bone tissues. In MRF and HMM, spatial dependence between neighboring voxels has been taken into account. 

EM-algorithm was developed for SGMM parameter estimation. EM-gradient algorithm was used to estimate the parameters of MRF while EM-algorithm was utilized to estimate the parameters of GMM and HMM. For MRF, the updates of the parameters at M-step of EM-algorithm were difficult to obtain in an explicit form and thereby, a gradient based optimization was utilized during parameter estimation process. Moreover, Gibbs sampling was used at E-step of the algorithm. Thus, the estimation process was expensive in MRF. Unlike the other models, log-likelihood function in MRF involves Gibbs field and it is not computable. This means that log-likelihood based selection of optimal parameter estimate is not feasible analytically. Consequently, mean squared error was employed in selecting the optimal parameter estimates of the models. 

Table~\ref{tab:Overall} demonstrated that HMM outperformed the other models. This was essentially due to its best estimation performance on non-bone tissues, see Table~ \ref{tab:Non_Bone_region}.  According to Karlsson et al. \cite{ABO12}, the key task in  CT image estimation from MR images is to obtain an improved bone tissue estimation.   Table~\ref{tab:Bone_region} revealed that GMM* (GMM trained on each major tissue type), SGMM, and MRF had better prediction accuracy than HMM and GMM on the bone tissues. In addition, Table~\ref{tab:Dense_Bone_region} shows that GMM* and SGMM had noticeable outperformance on dense bone tissues.  Based on mean absolute error, HMM and GMM perform similarly on the bone tissues. 

To provide statistical evidence for the CT image estimation performance of SGMM in comparison with the remaining methods, the  Wilcoxon signed-rank tests have been carried out. The tests has shown that SGMM has better performance on bone tissues than HMM and GMM. In addition, the Wilcoxon signed-rank tests provided an evidence that SGMM has significant outperformance than MRF, HMM, and GMM on dense bone tissues. 

The skewness assumption allowed to recognize skewness in the partitions of the data.  The estimates of the skewness parameters demonstrated that the partitions have skewness property and the nature of skewness was dependent on the subtissue types. The estimates of the skewness parameters ranged from -1.54 to 3.03. In this particular application,  however, the skewness assumption did not improve CT image estimation quality as compared to the symmetric assumption in GMM*, see Table~ \ref{tab:Bone_region}--\ref{tab:PSNR}.

Figure~\ref{Or3} shows that the models performed better on soft tissues, that is tissues having CT image intensity closer to 0HU. On the other hand, the models had weaker prediction accuracy on the two extremes that is on air and bone tissues. This pattern of residual plots have been observed in \citep{ABO10, ABO14}. Moreover, the figure shows that none of the models outperformed throughout the tissues of the head. Table~\ref{tab:Bone_region}--\ref{tab:Overall} demonstrate that the predictive accuracy of the models was dependent on the heads. That is the results were not uniform over the heads. This might have a problem in real applications and  needs a further investigation. 

The bias of the methods in estimation of the CT images was manifested in the smoothed residual and Bland-Altman plots  in Figures \ref{Or4} and \ref{BlandAltman}. Though the bias is higher on bone tissues, it was improved by our approach as compared to the remaining methods. This was the main concern of this work. The Bland-Altman plots also demonstrated higher variability of the estimation on lower CT image values. 
\section{Conclusions}
In this study, we examined CT image estimation by partitioning the data into two major tissue types. This partitioning  approach is an efficient way to get a good quality CT image substitute with improved estimation of bone tissues. Moreover, the SGMM and the developed algorithm to estimate its parameters is general and can be applied to other applications.
\section*{Acknowledgments}
 \noindent  This work is supported by the Swedish Research Council grant (Reg. No. 340-2013-5342). The authors thank Adam Johansson and Thomas Asklund for providing us data and David Bolin for providing us Mathlab code for MRF. Moreover, the authors thank Kristi Kuljus for HMM results. The computations were performed on resources provided by the Swedish National Infrastructure for Computing (SNIC) at High Performance Computing Center North (HPC2N).
\clearpage


\begin{thebibliography}{99}
\bibitem{ABO1}
H. Arabi, N. Koutsouvelis, M. Rouzaud, R. Miralbell, and H. Zaidi, \emph{Atlas-guided generation of pseudo-{CT} images for {MRI}-only and hybrid {PET}-{MRI}-guided radiotherapy treatment planning}, Phys. Med. Biol. 61 (2016), pp. 6531--6552.
\bibitem{ABO2}
A. Azzalini, \emph{A class of distributions which includes the normal ones}, Scandinavian Journal of Statistics
 12(1985), pp. 171--178.
\bibitem{ABO3}
 A. Azzalini, and A. Dalla  Valle,\emph{The multivariate skew-normal distribution}, Biometrika 83 (1996), pp. 715--726. 
\bibitem{ABO4}
 T. Boettger, T. Nyholm,  M. Karlsson, C. Nunna, and J. C. Celi, \emph{Radiation therapy planning and simulation with magnetic resonance images}, Proc SPIE 6918 (2008). Available at  http://dx.doi.org/10.1117/12.770016.

\bibitem{ABO5}
C. R. B. Cabral,  V. H. Lachos, and  M. O. Prates, \emph{Multivariate mixture modeling using skew-normal independent distributions}, Computational Statistics and Data Analysis 56 (2012), pp. 126--142.
\bibitem{ABO6}
 L. Chen, R. A. Price, L. Wang, J. Li,  L. Qin, S. McNeeley,  C-M. Ma, G. M. Freedman,  and A. Pollack, \emph{{MRI}-based treatment planning for radiotheraphy: domestic verification for prostate {IMRT}}, Int. J. Radiation Oncology, Biol., Phys. 67 (2004), pp. 636--647. 
\bibitem{ABO7}
A. P. Dempster, N.  M. Laird, and D. B. Rubin, \emph{Maximum likelihood from incomplete data via the {EM}-algorithm}, Journal of Royal Statistical Society 39 (1977), pp. 1--38.
\bibitem{ABO8}
J. Ding, and A. Zhou, \emph{Eigenvalues of rank-one updated matrices with some applications}, Applied Mathematics Letters 20 (2007), pp. 1223--1226.
\bibitem{ABO9}
T. Huynh, Y. Gao, J. Kang, L. Wang, P. Zhang, J. Lian, and D. Shen, \emph{Estimating {CT} image from {MRI} data using structured random forest and auto-context model}, IEEE transactions on medical imaging 35 (2016), pp. 174--183.
\bibitem{ABO10}
A. Johansson, M. Karlsson, and T. Nyholm, \emph{{CT} substitute derived from { MRI} sequences with ultrashort echo time}, Medical Physics 38 (2011), pp. 2708--2714. 
\bibitem{ABO11}
A. Johansson, M. Karlsson, J. Yu, T. Asklund, and T. Nyholm, \emph{Voxel-wise uncertainty in {CT} substitute derived from {MRI}}, Medical Physics 39 (2012), pp. 3283--3290.
\bibitem{ABO12}
M. Karlsson, M. G. Karlsson, T. Nyholm, C. Amies, and B. Zackrisson, \emph{Dedicated magnetic resonance imaging in the radiotherapy clinic}, Int. J. Radiation Oncology, Biol., Phys. 74 (2009), pp. 644--651.
\bibitem{ABO13}
M. Khounsiavash, M. Ganjali, and T. Baghfalaki, \emph{A Stochastic Version of the {EM} Algorithm to Analyze Multivariate Skew-Normal Data with Missing Responses}, Applications and Applied Mathematics 6 (2011), 
pp. 412-427.
\bibitem{ABO14}
 K. Kuljus, F. L. Bayisa,  D. Bolin, J. Lember, and J. Yu, \emph{Comparison of hidden {M}arkov chain models and hidden {M}arkov random field models in estimation of computed tomography images}, Communications in Statistics: Case Studies, Data Analysis and Applications, 4:1, pp. 46-55 (2018).
\bibitem{ABO15}
V. H. Lachos, H. Bolfarine, R. B. Arellano-Valle, and L. C. Montenegro, \emph{Likelihood-Based Inference for Multivariate SkewNormal Regression Models}, Communications in Statistics -Theory and Methods 36 (2007),
pp. 1769-1786.
\bibitem{ABO16}
T. I. Lin, \emph{Maximum likelihood estimation for multivariate skew normal mixture models}, Journal of Multivariate Analysis 100 (2009), pp. 257-265.
\bibitem{ABO17}
T. I. Lin,  J. C. Lee, and S. Y. Yen, \emph{Finite mixture modelling using the skew normal distribution}, Statistica Sinica 17 (2007), pp. 909--927.
\bibitem{ABO18}
 D. Nie, X. Cao, Y. Gao, L. Wang, and D. Shen, \emph{Estimating {CT} image from {MRI} data using {3D} fully convolutional networks}, Proceedings 10008 LNCS (2016), pp. 170--178. 
\bibitem{ABO19}
J. Sherman, and W. J. Morrison,  \emph{Adjustment of an inverse matrix corresponding to changes in the elements of a given column or a given row of the original matrix}, Ann. Math. Statist. 20 (1949), pp. 621. 
\bibitem{ABO20}
C. M. Washington,  and  D. T. Leaver(ed.), \emph{Principles and Practice of Radiation Therapy}, Elsevier, USA, 2015. 
\bibitem{ABO21}
K. M. Waterstram-Rich, and D. Gilmore(ed.), \emph{Nuclear Medicine and {PET}/{CT}: Technology and Techniques}, Elsevier, USA,  2016.

\bibitem{ABO22} F. Kahrari, M. Rezaei, F. Yousefzadeh, and R.B. Arellano-Valle, \emph{On the multivariate skew-normal-Cauchy distribution}, Statistics \& Probability Letters 117 (2016), Elsevier, pp. 80--88.

\bibitem{ABO23} F. Kahrari, R.B. Arellano-Valle, M. Rezaei, and F. Yousefzadeh, \emph{Scale mixtures of skew-normal-Cauchy distributions}, Statistics \& Probability Letters 126 (2017), Elsevier, pp. 1--6.

\bibitem{ABO24} R.B. Arellano-Valle, and M.G. Genton, \emph{On fundamental skew distributions}, Journal of Multivariate Analysis 96 (2005), Elsevier, pp. 93--116.

\bibitem{ABO25} A. Jamalizadeh, and T.I. Lin, \emph{A general class of scale-shape mixtures of skew-normal distributions: properties and estimation}, Computational Statistics 32 (2017), Springer, pp. 451--474.

\bibitem{ABO26} M. Tamandi, A. Jamalizadeh, and T.I. Lin, \emph{Shape mixtures of skew-t-normal distributions: characterizations and estimation}, Computational Statistics (2018), Springer, pp. 1--25.

\bibitem{ABO27} H.W. G{\'o}mez, O. Venegas, and H. Bolfarine, \emph{Skew-symmetric distributions generated by the distribution function of the normal distribution}, Environmetrics: The official journal of the International Environmetrics Society 18 (2007), Wiley Online Library, pp. 395--407.

\bibitem{ABO28} T.I. Lin, H.J. Ho, and  C.R. Lee, \emph{Flexible mixture modelling using the multivariate skew-t-normal distribution}, Statistics and Computing 24 (2014), Springer, pp. 531--546.

\bibitem{ABO29} R.B. Arellano-Valle, , C.S. Ferreira, and M.G. Genton, \emph{Scale and shape mixtures of multivariate skew-normal distributions}, Journal of Multivariate Analysis 166 (2018), Elsevier, 98--110.

\bibitem{ABO30} X.L. Meng, and D.B. Rubin, \emph{Maximum likelihood estimation via the ECM algorithm: A general framework}, Biometrika 80 (1993), Oxford University Press, pp. 267--278.

\bibitem{ABO31} C. Liu, and D.B. Rubin, \emph{The ECME algorithm: a simple extension of EM and ECM with faster monotone convergence}, Biometrika 81 (1994), Oxford University Press, pp. 633--648.

\bibitem{ABO32} J. M. Bland, and D. G. Altman, \emph{Statistical methods for assessing agreement between two methods of clinical measurement}, The lancet 327 (1986), Elsevier, pp. 307--310.

\bibitem{ABO33} J. M. Bland, and D. G. Altman, \emph{Measuring agreement in method comparison studies}, Statistical methods in medical research 8 (1999), Sage Publications Sage CA: Thousand Oaks, CA, pp. 135--160.

\bibitem{ABO34} J. S. Krouwer, \emph{Why Bland-Altman plots should use X, not (Y+ X)/2 when X is a reference method}, Statistics in medicine 27 (2008), Wiley Online Library, pp. 778--780.
\end{thebibliography}
\end{document}